\providecommand\@dotsep{5}
\def\listtodoname{}
\def\listoftodos{\@starttoc{tdo}\listtodoname}
\newcounter{nmcomment}
\begin{document}
\title{Linear Space Data Structures for Finite Groups with Constant Query-time}

%
%
\author{Bireswar Das\inst{1} \and Anant Kumar\inst{1} \and Shivdutt Sharma\inst{2} \and Dhara Thakkar \inst{1} \thanks{Funded by CSIR-UGC NET JRF Fellowship.}}
\authorrunning{B. Das, A. Kumar, S. Sharma, and D. Thakkar}
%
\institute{Indian Institute of Technology Gandhinagar, Gandhinagar \and
Indian Institute of Information Technology, Una, India
\email{\{bireswar,kumar\_anant,thakkar\_dhara\}@iitgn.ac.in}\\
\email{shiv@iiitu.ac.in}
}
\maketitle              
\begin{abstract}
A finite group of order $n$ can be represented by its Cayley table. In the word-RAM model the Cayley table of a group of order $n$ can be stored using $O(n^2)$ words and can be used to answer a multiplication query in constant time. It is interesting to ask if we can design a data structure to store a group of order $n$ that uses $o(n^2)$ space but can still answer a multiplication query in constant time. 

We design a constant query-time data structure that can store any finite group using  $O(n)$ words where $n$ is the order of the group.

Farzan and Munro (ISSAC 2006) gave an information theoretic lower bound of $\Omega(n)$ on the number of words to store a group of order $n$. Since our data structure achieves this lower bound and answers queries in constant time, it is optimal in both space usage and query-time.

A crucial step in the process is essentially to design linear space and constant query-time data structures for nonabelian simple groups. The data structures for nonableian simple groups are designed using a lemma that we prove using the Classification Theorem for Finite Simple Groups (CFSG).

\keywords{Compact Data Structures \and Space Efficient Representations \and Finite Groups \and Simple Groups \and Classification Theorem for Finite Simple Groups}

{\bf Related Version:} A preliminary version of this article appeared in the proceedings of the 39th International Symposium on Theoretical Aspects of Computer Science (STACS 2022).

\end{abstract}

\section{Introduction}\label{Introduction}

The Cayley table of a group of order $n$ is a two dimensional table whose $(i,j)$th entry is the product of the $i$th and $j$th element of the group. In the word-RAM model while it takes $O(n^2)$ words to store the Cayley table of a group of order $n$, a multiplication query can be answered in constant time by accessing the appropriate location of the table.
 
For many computational problems in group theory the input group is given by its Cayley table. Some of these problems include the minimum generating set problem, various problems in property testing, the group factoring problem, and the group isomorphism problem \cite{Kumar-Ravi,Arvind-Jacobo,Kayal-Neeraj,Miller}. Among these, the group isomorphism problem is probably the most prominent one because of its unresolved complexity status despite years of extensive research \cite{Babai,Le-Gall,Jacobo-Arvind,Babai-1,Jayalal-Sarma,Kavitha-T}.

The Cayley table is very fast in terms of query processing but it takes quadratic space to store a group. It is interesting to ask if we can design a data structure for finite groups using $o(n^2)$ space\footnote{In this paper we use the word-RAM model. The space used by a data structure or an algorithm refers to the number of words used by them.} which can still answer multiplication query in constant time. We note that while quasigroups, and semigroups can also be stored using their Cayley tables, it is not possible to store quasigroups, and semigroups using $o(n^2)$ space. This is simply because the numbers of quasigroups, and semigroups are too large \cite{Lint-Wilson,Daniel-Rothschild} and the information theoretic lower bound is $\Omega(n^2\log n)$ bits or $\Omega(n^2)$ words. 

Das et al. \cite{JCSS} showed that for any finite group $G$ of order $n$ and for any $\delta\in[1/\log{n}, 1]$, a data structure can be constructed for $G$ that  uses $O(n^{1+\delta}/\delta)$ space and answers a multiplication query in time $O(1/\delta)$. Their result implies that there exist constant query-time data structures for finite groups of order $n$ that use $O(n^{1.01})$ space. However, the result cannot be used to design a constant query-time data structure even if we are allowed to use $\Theta(n.polylog(n))$ space.

In this paper we design constant query-time data structures for finite group that can be stored using $O(n)$  words where $n$ is the order of the group.  An information theoretic argument by Farzan and Munro shows that a lower bound to store a group of order $n$ is $\Omega(n\log n)$ bits or $\Omega(n)$ words \cite{Farzan}. Our data structure is optimal in the sense that it achieves the lower bound. A data structure that achieves the optimum information theoretic lower bound asymptotically is known as \emph{a compact data structure}. Therefore our data structure is a constant query-time compact data structure for finite groups. We note that compact query-time data structures were designed for some restricted classes of groups such as abelian groups and Dedekind groups \cite{WALCOM}.

In the process of designing the data structure we first prove two results, which we call \emph{extension theorems}, on the construction of data structures for a group when we already have a data structure for a subgroup of the given group. The extra space used by the newly constructed data structure depends on the index of the subgroup in one of the results and the structure\footnote{The subgroup needs to be normal and quotient needs to be cyclic.} of the subgroup in the other result. This indicates that finding suitable subgroups of a group might be useful.

The Jordan-H\"older theorem provides us with a supply of subgroups in the form of composition series. In our process we try to pick some suitable subgroups that are elements of the composition series of the given group. However, picking suitable groups is not always possible. This happens, as we will see in Section~\ref{Compact Data structure}, when  there is a ``large'' composition factor sitting in a certain position of the composition series. The composition factors are simple groups. In a sense the hard cases for constructing the data structure are for the simple groups.

Simple groups are sometimes considered as the building blocks for finite groups. The Classification Theorem for Finite Simple Groups (CFSG) is one of the most important theorems in group theory. Informally, this theorem classifies the finite simple groups into cyclic groups, alternating groups, certain groups of Lie-type and into 26 sporadic simple groups. The precise statement of the theorem could be queries found in Section~\ref{proof-of-lemma}. Except for the 26 sporadic simple groups the other group classes are infinite. We use CFSG to prove a key lemma that allows us to handle the case for the nonabelian simple groups. 

We note that for solvable groups the design of the data structure is \emph{ independent} of CFSG. The composition factors of a solvable group are cyclic of prime order. Such cases are handled using one of the extension theorems proved in Section~\ref{extension-theorems}.

\emph{Related work}: Farzan and Munro \cite{Farzan} gave a succinct representations for finite abelian groups in a specific model of computation. In their  model \emph{a compression algorithm} first produces labels of each group element. The queries are processed by a \emph{query processing unit} which is similar to the word-RAM model. However, along with  the common arithmetic, logical and comparison  operations the query processing unit can also perform   bit-reversal in constant time. A user issuing a query, supplies the labels of two group elements that were generated by the compression algorithm to the query processing unit which then returns the label of the product of the two elements.   

Das et al. \cite{JCSS} and Das and Sharma \cite{WALCOM} have used   Erd\"os-R\'eyni cube generating sequences, Remak-Krull-Schmidt decomposition and  the structure of indecomposable groups to design their space and query-time efficient data structures. Our approach is quite different in the sense that we use the extension theorems (Section~\ref{extension-theorems}) and the Classification Theorem for Finite Simple groups to design the data structures. 

\emph{Remark}: There are several ways to represent a finite group apart from the Cayley table representation. The permutation group representation, the polycyclic presentations and the generator-relator presentations are some of the common group representations. These representations are often incomparable. For example in the generator-relator presentation we can represent infinite groups. However, many problems such as the membership testing, testing if a group is finite becomes undecidable in the generator-relator presentation (c.f. \cite{sims1}). In the permutation group representation the membership testing takes superlinear time in terms of the degree of the representation and polylogarithmic in the order of the group \cite{sims2,sims3,John-Hopcroft}. We contrast this with the Cayley table representation where membership testing can be done in constant time since the elements are known and are already used as row and column indices of the Cayley table. In the Cayley representation the user knows the labels or the names of each group element explicitly and has a direct access to each element. The labels of the elements are often taken to be $1,2,\ldots, n$ where $n$ is the order of the group. The situation is quite different for permutation group representation, polycyclic presentation or generator-relator presentation. In these cases the user does not have an explicit representation for each element.  


\section{Preliminary}\label{Preliminary}

In this section we recall some definitions and notations which we use in this paper. In this paper we only consider finite groups. The number of elements in a group $G$ is called the \emph{order of $G$} and is denoted by $|G|$. A group $G$ is abelian if $g_1 g_2=g_2 g_1$ for all $g_1, g_2 \in G.$ For a subgroup $H$ of $G$ and $g\in G$, the set $gH=\{gh \, |\, h\in H\}$ is called a left coset. Similarly, we can define right coset of $G$. The number of the left (or right) cosets of $H$ in $G$ is called the \emph{index of $H$ in $G$} and is denoted by $[G:H].$ A \emph{left traversal} of $H$ in $G$ is a set containing exactly one element from each left coset and similarly we can define right traversals. The size of left (right) traversal is the same as the index $[G:H].$ For $g \in G$, the set $gHg^{-1}=\{gag^{-1} \, | \, a \in H \}$ is called a conjugate of the subgroup $H$. A subgroup $H$ of G is said to be \emph{normal} in $G$ (denoted $H \trianglelefteq G$) if $gHg^{-1}=H$ for all $g \in G.$ We define the \emph{normalizer} of $H$ in $G$ to be the set $N_G{(H)}=\{g \in G \, | \, gHg^{-1}=H \}.$ Note that, $N_G{(H)}$ is the largest subgroup in $G$ in which $H$ is normal.

A group $G$ is called \emph{simple} if $G$ has no nontrivial normal subgroup. 
The Classification Theorem of Finite Simple Groups states that all the finite simple groups can be classified into the following five classes: (1) cyclic groups of prime order, (2) alternating groups, (3) classical groups, (4) exceptional groups of Lie type and (5) 26 sporadic simple groups.

We list all the classes of the finite simple groups later in the Classification Theorem for Simple Groups in Section \ref{proof-of-lemma}. If $G$ is a finite simple group of Lie-type over $\mathbb{F}_q$ where $q$ is a power of some prime $p$, the Borel subgroup $B$ of $G$ is defined as the semidirect product of the Sylow $p$-subgroup of $G$ with the maximal split torus $T$. The Borel subgroup is also the normalizer of the Sylow $p$-subgroup of the finite simple group (see \cite{CA}, \cite{WLS}). 

For the purpose of this paper it might be sufficient to know some results on the \emph{orders} of certain subgroups of simple groups. The reader may choose to skip the details of the structure of these groups. We indicate what kind of subgroups we are interested in and the results regarding the order of those subgroups as and when required.
An interested reader may refer to the books by Carter \cite{CA}, Wilson \cite{WLS}, or Aschbacher \cite{ASH} for more details.

\begin{definition}[see e.g., \cite{Algebra}]
A \emph{subnormal series} of a group $G$ is chain of subgroups $$ 1= G_{k} \leq G_{k-1} \leq \dots \leq G_1   \leq G_{0}=G$$ such that $G_{i} \trianglelefteq G_{i-1},$ for all $i.$ 
\end{definition}
\begin{definition}[see e.g., \cite{Algebra}]
In a group $G$ a sequence of subgroups
$$ 1= G_{k} \leq G_{k-1} \leq \dots \leq G_1   \leq G_{0}=G$$ is called a \emph{composition series} if $G_{i} \trianglelefteq G_{i-1}$ and   $G_{i-1} / G_{i}$ is simple for all $i \in [k].$ Here, $k$ is the \emph{composition length} of $G.$
\end{definition}

\begin{theorem}[Jordan-Hölder  Theorem see e.g., \cite{Algebra}]
\noindent Let $G$ be a finite group with $G \neq 1 $. Then
\begin{enumerate}
    \item[(i)] $G$ has a composition series.
    \item[(ii)] The composition factors in a composition series are unique, namely, if $1=N_{r} \leq N_{r-1} \leq \cdots \leq N_{1} \leq N_{0}=G$ and $1=M_{s} \leq M_{s-1}\leq \cdots \leq M_{1} \leq M_{0}=G $ are two composition series for $G$, then $r=s$ and there is some permutation $\pi$ of $\{ 1,2,\dots,r\}$ such that, $$\frac{M_{\pi(i)}}{M_{\pi(i)+1}} \cong \frac{N_{i}}{N_{i+1}},  \text{for} \, 1 \leq i \leq r. $$
\end{enumerate}
\end{theorem}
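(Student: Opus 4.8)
The plan is to establish the two assertions separately, treating existence by a short induction on the order of the group and uniqueness by the classical argument built on the Second Isomorphism Theorem. For part (i), I would induct on $|G|$. If $G$ is simple then the chain $1 \trianglelefteq G$ is already a composition series, which settles the base case (in particular every group of prime order). Otherwise, since $G$ is finite and nontrivial, it has a maximal proper normal subgroup $N$; by the correspondence between normal subgroups of $G/N$ and normal subgroups of $G$ containing $N$, maximality of $N$ forces $G/N$ to be simple. Because $|N| < |G|$, the induction hypothesis supplies a composition series $1 = N_r \trianglelefteq \cdots \trianglelefteq N$, and prepending $G$ to it yields a composition series for $G$.

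For part (ii), I would induct on the length $r$ of one of the two series. If $r = 1$ then $G$ is simple, so its only composition series is $1 \trianglelefteq G$; hence $s = 1$ and the matching is trivial. For the inductive step, given composition series $(N_i)$ and $(M_j)$ with top factors $G/N_1$ and $G/M_1$, I would split into two cases according to whether the maximal normal subgroups $N_1$ and $M_1$ coincide. If $N_1 = M_1$, then the tails $1 = N_r \trianglelefteq \cdots \trianglelefteq N_1$ and $1 = M_s \trianglelefteq \cdots \trianglelefteq M_1$ are both composition series of the shorter group $N_1$; the induction hypothesis applied to $N_1$ gives $r = s$ together with the required permutation of the lower factors, and the common top factor $G/N_1 = G/M_1$ is then appended.

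The crux is the case $N_1 \neq M_1$. Here I would put $K = N_1 \cap M_1$ and note that $N_1 M_1$ is normal in $G$ and strictly contains both $N_1$ and $M_1$; by their maximality $N_1 M_1 = G$. The Second Isomorphism Theorem then yields $G/N_1 = N_1 M_1 / N_1 \cong M_1 / K$ and, symmetrically, $G/M_1 \cong N_1 / K$; in particular $N_1/K$ and $M_1/K$ are simple. Choosing (via part (i)) a composition series for the finite group $K$ and inserting $K$ immediately below $N_1$ and below $M_1$, I obtain two composition series of $G$ through $K$, one refining the step $G \trianglerighteq N_1$ and one refining $G \trianglerighteq M_1$. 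Applying the induction hypothesis to $N_1$ (which has length $r-1$) compares its original tail with the new tail through $K$, and likewise for $M_1$; combined with the two isomorphisms above, this shows that the multiset of factors of $(N_i)$ agrees with that of the series through $K$, which in turn agrees with that of $(M_j)$, giving $r = s$ and the permutation $\pi$.

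I expect the main obstacle to lie in the bookkeeping for the case $N_1 \neq M_1$: one must confirm that inserting the composition series of $K$ actually produces genuine composition series (the simplicity of $N_1/K$ and $M_1/K$ is exactly what the two isomorphisms guarantee), and then track precisely how the four ``new'' factors $G/N_1$, $N_1/K$, $G/M_1$, $M_1/K$ pair off as multisets under those isomorphisms, so that the two permutations delivered by the separate inductive applications to $N_1$ and $M_1$ can be stitched together into the single permutation $\pi$ required by the statement.
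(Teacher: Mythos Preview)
Your proposal is a correct rendering of the standard proof of the Jordan--H\"older theorem: existence by induction on $|G|$ via a maximal normal subgroup, and uniqueness by induction on the length using the Second Isomorphism Theorem at the ``diamond'' $N_1, M_1, N_1M_1=G, K=N_1\cap M_1$. The bookkeeping you flag is indeed the only delicate point, and your outline handles it properly (first apply the inductive hypothesis inside $N_1$ to pin down the length of a composition series of $K$, then inside $M_1$).

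As for comparison with the paper: the paper does not prove this theorem at all. It is stated in the preliminaries with a reference to a standard algebra text and is used purely as background (to guarantee a composition series exists and that composition factors are well defined). So there is no ``paper's own proof'' to compare against; your argument simply supplies what the authors chose to cite.
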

\begin{theorem}[Correspondence Theorem see e.g., \cite{rot}] 
\noindent Let K $\trianglelefteq$ G and let $v: G\longrightarrow G/K$ be
the canonical map i.e. $v(g)=Kg$ for all $g.$ Then $S \mapsto v(S) = S/K$ is a bijection from the family of all those subgroups S of G which contain K to the family of all the subgroups
of $G/K$.
\end{theorem}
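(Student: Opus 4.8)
The plan is to check that the assignment $\Phi \colon S \mapsto v(S)$ is well-defined, injective, and surjective, where $S$ ranges over the subgroups of $G$ satisfying $K \leq S \leq G$ and $v(S)$ ranges over the subgroups of $G/K$; equivalently, this amounts to showing that $\bar T \mapsto v^{-1}(\bar T)$ is a two-sided inverse. Throughout I would use that $v$ is a surjective group homomorphism whose kernel is exactly $K$.

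First I would establish that $\Phi$ is well-defined. Since $v$ is a homomorphism, the image $v(S)$ of any subgroup $S \leq G$ is automatically a subgroup of $G/K$; and because $K \trianglelefteq G$ we also have $K \trianglelefteq S$ whenever $K \leq S$, so $v(S) = S/K$ is a bona fide quotient subgroup. Hence $\Phi$ indeed sends each subgroup containing $K$ to a subgroup of $G/K$.

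The heart of the argument, and the step I expect to be the main obstacle, is injectivity, which rests on the identity $S = v^{-1}(v(S))$ holding for every $S$ with $K \leq S$. The inclusion $S \subseteq v^{-1}(v(S))$ is immediate. For the reverse inclusion, suppose $g \in v^{-1}(v(S))$, so that $v(g) = v(s)$ for some $s \in S$; then $v(gs^{-1}) = 1$, i.e.\ $gs^{-1} \in \ker v = K \subseteq S$, and therefore $g = (gs^{-1})s \in S$. This is exactly where the hypothesis $K \leq S$ is used, and without it the map would fail to be injective. Granting the identity, if $v(S_1) = v(S_2)$ for two subgroups containing $K$, then $S_1 = v^{-1}(v(S_1)) = v^{-1}(v(S_2)) = S_2$, which gives injectivity.

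Finally, for surjectivity I would take an arbitrary subgroup $\bar T \leq G/K$ and set $S = v^{-1}(\bar T)$. As the preimage of a subgroup under a homomorphism, $S$ is a subgroup of $G$; it contains $K = v^{-1}(\{1\})$ because $1 \in \bar T$; and $v(S) = \bar T$ since $v$ is surjective onto $G/K$. Thus every subgroup of $G/K$ arises as $\Phi(S)$ for some admissible $S$, completing the proof that $\Phi$ is a bijection.
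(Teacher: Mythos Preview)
Your proof is correct and is the standard textbook argument. Note, however, that the paper does not give its own proof of this statement: the Correspondence Theorem is quoted in the preliminaries with a reference to a standard source and is used without proof, so there is nothing to compare against beyond observing that your argument matches the usual one found in such references.
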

\vspace{0.2cm}
\textbf{Model of computation :} In this paper, we use an abstract model of computation known as the word-RAM model. In this model, data is stored in resisters and memory units. Each memory unit and resister can store $O(\log n)$ bits where $n$ is the size of the input. The unit of storage is called \emph{word}. The machine in the word-RAM model can access a word and do the usual arithmetic, logical, and comparison operations in constant time. The input size for our purpose is the order of the group. Without loss of generality, we can assume that the elements of groups are $1,2,3,...,n$. Thus, every group element can be stored in a word and can be accessed in constant time.

There are two phases in the construction of a data structure: \emph{the preprocessing phase} and \emph{the query phase}. In the preprocessing phase, we assume that we have been given a finite group by its Cayley table. Using the Cayley table, we construct a data structure that consists of some arrays and tables. In the query phase, we process multiplication queries. In a multiplication query, two group elements $g_1$ and $g_2$ are given by the user. The task is to find the product of $g_1$ and $g_2$. In this phase, the data structure constructed in the preprocessing phase is accessed to answer the query. The time taken to answer a single query is called the {\it query-time}.

The time and space used in preprocessing stage to build the data structure are generally not considered. However we show that the data structure in our case can be computed in polynomial time. Our primary concern is to consider the space used by the constructed data structure and the time it takes to answer a query to multiply the group elements.

\begin{definition}
Let $G$ be a group and $s$ and $t$ be two positive real numbers. We say that $G$ has an $(s,t)$-data structure, if $G$ can be stored in a data structure that uses at most $s$ space and can answer  a multiplication query in time at most $t$.
\end{definition}

\begin{definition}
Let $\mathcal{G}$ be a class of group and let $s,t: \mathbb{N} \rightarrow \mathbb{R}_{\geq 0}$ be two functions. If for every group $G \in \mathcal{G}$ of order $n$ there is a data structure that uses $O(s(n))$ space to store $G$ and can answer a multiplication query in time at most $O(t(n))$ then we say that $\mathcal{G}$ has an $(O(s(n)),O(t(n)))$-data structure.
\end{definition}


 \section{Extension Theorems}\label{extension-theorems}
In this section, we discuss how to use  data structures for subgroups to build new data structures for groups containing the subgroups. 
 
\begin{theorem}\label{thm-quotient}
There exist positive constants $c$ and $d$ such that for any group $G$ and a subgroup $H$ of $G$ if $H$ has an $(s,t)$-data structure for some $s$ and $t$ then $G$ has an $(s+c([G:H]^2+|G|),2t+d)$-data structure.
\end{theorem}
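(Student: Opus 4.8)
The plan is to fix a left transversal $\{r_1,\ldots,r_k\}$ of $H$ in $G$, where $k=[G:H]$, and to represent every $g\in G$ by the pair $(i,h)$ in which $r_iH$ is the unique coset containing $g$ and $h=r_i^{-1}g\in H$. I would store two arrays of length $|G|$ recording, for each $g$, its coset index $i$ and the label of $h$ inside $H$'s own data structure, together with a reconstruction table $\mathrm{elt}[i][h]=r_i h$ indexed by a coset index and an element of $H$; this table has $k\cdot|H|=|G|$ entries. So far everything costs $O(|G|)$ words on top of the $s$ words already spent on the data structure for $H$.

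Then I would tackle the product itself. Writing $g_1=r_i h_1$ and $g_2=r_j h_2$, associativity gives
\[
g_1 g_2 = r_i\,(h_1 r_j)\,h_2,
\]
so the whole difficulty is to \emph{straighten} $h_1 r_j$ back into normal form, i.e. a coset representative times an element of $H$. The main obstacle is exactly here: since $H$ is not assumed normal, $h_1 r_j$ need not lie in $r_j H$, so there is no algebraic shortcut, and I cannot invoke $H$'s data structure because $h_1 r_j$ generally lies outside $H$. I would resolve this with a precomputed straightening table that stores, for each $h\in H$ and each representative index $j$, the pair $(\ell,h')$ with $h r_j=r_\ell h'$. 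Crucially this table has only $|H|\cdot k=|G|$ entries, so it stays within the $O(|G|)$ budget.

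With the straightening table and a second precomputed table giving, for each pair of indices, the decomposition $r_i r_\ell=r_m h''$ (this one has $k^2$ entries, contributing the $[G:H]^2$ term), the query unfolds entirely by table lookups into
\[
g_1 g_2 = (r_i r_\ell)\,h' h_2 = r_m\,(h''\,h'\,h_2),
\]
where the final factor $h''\,h'\,h_2$ lies in $H$. I would evaluate it with two multiplications in $H$'s data structure, costing $2t$, and then read off the answer from $\mathrm{elt}[m][\,\cdot\,]$. Every other operation — extracting $(i,h_1)$ and $(j,h_2)$, the two straightening lookups, the final reconstruction — is a constant number of array accesses, giving query time $2t+d$ for a constant $d$. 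Summing the costs, $H$'s data structure contributes $s$, the two $O(|G|)$ tables and two $O(|G|)$ arrays contribute $O(|G|)$, and the representative-product table contributes $O(k^2)=O([G:H]^2)$, matching $s+c([G:H]^2+|G|)$.

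The remaining points are routine and I would only note them: the labels of $H$-elements used inside the tables must agree with the internal labels of $H$'s data structure, which is arranged by letting the arrays store those internal labels directly; and all tables are computable in polynomial time from the Cayley table of $G$, so preprocessing is efficient even though its cost is not charged against the $(s,t)$ bound.
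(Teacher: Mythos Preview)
Your proposal is correct and follows essentially the same approach as the paper: both decompose elements as a coset representative times an element of $H$, use an $O(|G|)$ table to ``straighten'' a product that crosses the $H$/transversal boundary, an $O([G:H]^2)$ table for products of two representatives, two calls to $H$'s data structure, and an $O(|G|)$ reconstruction table. The only cosmetic difference is that the paper fixes both a left and a right transversal (writing $g_1=\ell h$ and $g_2=h'r$ so that the first $H$-multiplication $hh'$ happens before any flipping), whereas you use a single left transversal and perform the two $H$-multiplications at the end; the bookkeeping and bounds are identical.
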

\begin{proof}
First we fix a left traversal $L$ and a right traversal $R$ of $H$ in $G$. Each $g\in G$ can be uniquely written as $g=hr$ where $h\in H$ and $r\in R$. Thus we can define functions $s_R:G\longrightarrow H$ and $c_R:G\longrightarrow R$ such that $g=s_R(g)c_R(g)$. Similarly we can define $c_L:G\longrightarrow L$ and $s_L:G\longrightarrow H$ such that $g=c_L(g)s_L(g)$. We can store these four functions in four arrays each of length $|G|$. 

Suppose we need to find the product of $g_1$ and $g_2$. Note that, 
\[g_1g_2=c_L(g_1)s_L(g_1)s_R(g_2)c_R(g_2).\]

Since $s_L(g_1), s_R(g_2)\in H$, we can use the data structure for $H$ to find $s_L(g_1)s_R(g_2)$ within time $t$. Let $h_1=s_L(g_1)s_R(g_2)$. Therefore, we can write  $g_1g_2=c_L(g_1)h_1c_R(g_2)$. 

Given $l\in L$ and $h\in H$, we know that there exist unique elements $h'\in H$ and $r\in R$ such that $lh=h'r$. Thus, we can define two functions  $Flip_H:L\times H\longrightarrow H$ and $Flip_R:L\times H\longrightarrow R$ such that $lh=Flip_H(l,h)Flip_R(l,h)$. We can store $Flip_H$ and $Flip_R$  in two $2$-dimensional arrays using space linear in $|H\times L|=|G|$. With the help of these functions, we can write
\[g_1g_2=Flip_H(c_L(g_1), h_1)Flip_R(c_L(g_1), h_1)c_R(g_2)=h_2r_1r_2\] where $h_2=Flip_H(c_L(g_1), h_1)$, $r_1=Flip_R(c_L(g_1), h_1)$ and $r_2=c_R(g_2)$.

Again we use the fact that any element $g$ of $G$ can be uniquely written as $g=hr$ where $h\in H$ and  $r\in R$  to define the functions $Cross_H: R\times R\longrightarrow H$ and $Cross_R: R\times R\longrightarrow R$ such that for all $r,r'\in R$ we have $rr'=Cross_H(r,r')Cross_R(r,r')$. Note that we can store these functions in two $2$-dimensional arrays each requiring size linear in $|R\times R|=(|G|/|H|)^2$.  With the help of these functions we can  write  \[g_1g_2=h_2Cross_H(r_1,r_2)Cross_R(r_1,r_2)=h_2h_3r_3\] where  $Cross_H(r_1,r_2)=h_3$ and $r_3=Cross_R(r_1,r_2)$.

Again we can use the data structure for $H$ to compute the product $h_4=h_2h_3$ within time $t$. Thus $g_1g_2=h_4r_3$. 
Finally, we define a function $Fuse:H\times R\longrightarrow G$ simply as $Fuse(h,r)=hr$ for all $h\in H$ and $r\in R$. Clearly, a $2$-dimensional array to store $Fuse$ would take space linear in $|H\times R|=|G|$. Thus, to produce the final result we just return $g_1g_2=Fuse(h_4, r_3)$.

All the functions except for $Cross_R$ and $Cross_H$ take space linear in $|G|$, while $Cross_R$ and $Cross_H$ take space linear in $(|G|/|H|)^2$. The data structure for $H$ takes space at most $s$. Therefore, the total space required is linear in $|G|+(|G|/|H|)^2$. We note that each function defined in this proof is queried exactly once. Thus, the time to query all the nine functions is bounded by some constant d. Additionally, the time taken to query the data structure for $H$ is at most $2t$. Therefore, we have the required data structure for $G$.
\end{proof}
\begin{remark}\label{remark-subgroup-ds}
We note that if the $(s,t)$-data structure for $H$ is given then the above data structure for $G$ can also be \emph{computed} in polynomial time.
\end{remark}

An immediate corollary of the above theorem is the following.
\begin{corollary}\label{cor-near-root-n}
Let $0<c_1\leq c_2$ be two constants. Let $\mathcal{G}_{c_1,c_2}$ be the class of groups $G$ that has a subgroup H with $c_1\sqrt{|G|}\leq |H|\leq c_2 \sqrt{|G|}$. Then $\mathcal{G}_{c_1,c_2}$ has $(O(n),O(1))$-data structures.
\end{corollary}
\begin{proof}
The Cayley table for $H$ takes size at most $c_2^2|G|$ and answers queries in constant time. Since $|G|/|H|\leq (1/c_1) \sqrt{|G|}$, we have $(|G|/|H|)^2\leq (1/c_1)^2 |G|$. Hence the result follows from Theorem~\ref{thm-quotient}. 
\end{proof}

In the next theorem we show how to use the data structure for a normal subgroup of a group to build a data structure for the group when the quotient group is cyclic.

\begin{theorem}\label{thm-cyclic}
There are positive constants $c$ and $d$ such that for every group $G$ and any normal subgroup $N$ of $G$, if $G/N$ is cyclic and $N$ has an $(s,t)$-data structure for some $s$ and $t$, then $G$ has an  $(s+c|G|,2t+d)$-data structure. 
\end{theorem}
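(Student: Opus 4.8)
The plan is to exploit the cyclic structure of the quotient to put each element of $G$ into a normal form, reducing a multiplication in $G$ to a bounded number of multiplications in $N$ plus constant-time lookups. Since $G/N$ is cyclic, fix $g\in G$ so that $Ng$ generates $G/N$, and let $m=[G:N]$. The cosets $N, Ng, Ng^2,\ldots,Ng^{m-1}$ are then precisely the distinct cosets of $N$, so every $x\in G$ can be written uniquely as $x=\nu(x)\,g^{e(x)}$ with $\nu(x)\in N$ and $e(x)\in\{0,1,\ldots,m-1\}$. First I would store the two functions $\nu$ and $e$ in arrays of length $|G|$, so that from the label of $x$ we can read off the pair $(\nu(x),e(x))$ in constant time.

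Next I would derive the multiplication rule. Writing $x_1=n_1g^a$ and $x_2=n_2g^b$, normality of $N$ gives $g^a n_2 g^{-a}\in N$, so
\[ x_1 x_2 = n_1\bigl(g^a n_2 g^{-a}\bigr)g^{a+b}. \]
Setting $n_2'=g^a n_2 g^{-a}$, the product is $(n_1 n_2')g^{a+b}$. The exponent satisfies $a+b\in\{0,\ldots,2m-2\}$, giving two cases. If $a+b<m$ the element is already in normal form $(n_1 n_2')g^{a+b}$. If $a+b\ge m$, then using $z:=g^m\in N$ (which holds because $(Ng)^m=N$ in $G/N$) we have $g^{a+b}=z\,g^{a+b-m}$, so the product equals $(n_1 n_2' z)\,g^{a+b-m}$, again in normal form with exponent in range.

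The key device keeping everything linear and constant-time is to precompute the conjugation. I would build a table $\mathrm{conj}\colon\{0,\ldots,m-1\}\times N\to N$ with $\mathrm{conj}(a,n)=g^a n g^{-a}$; this occupies $m\cdot|N|=|G|$ words and lets us obtain $n_2'$ by a single lookup. Likewise a reconstruction table $\mathrm{Fuse}\colon N\times\{0,\ldots,m-1\}\to G$, $\mathrm{Fuse}(n,i)=n g^i$, of size $|N|\cdot m=|G|$ words, returns the $G$-label of a normal form. One multiplication in $N$ computes $n_1 n_2'$, and in the wraparound case one further multiplication by the stored element $z$ is needed. Hence each query uses at most two multiplications in $N$ together with a constant number of array and table lookups, yielding query-time $2t+d$, while the total space is $s$ for the data structure of $N$ plus the four linear-size arrays and tables, i.e.\ $s+c|G|$.

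The main obstacle is ensuring the conjugation step stays within budget: a naive evaluation of $g^a n g^{-a}$ would cost time growing with $a$, so the precomputed $\mathrm{conj}$ table is essential, and one must check its size $m\cdot|N|$ telescopes exactly to $|G|$ rather than something larger. A secondary point to verify carefully is the wraparound bookkeeping — that $a+b$ never exceeds $2m-2$ and that the single correction factor $z=g^m$ suffices — together with, as in Theorem~\ref{thm-quotient}, the consistent labelling of elements of $N$ so that the data structure for $N$ can be invoked directly. I note that storing an additional table $\mathrm{conj}(a,n)\cdot z$ would let the wraparound case also use only one multiplication in $N$, sharpening the query-time to $t+d$; but $2t+d$ already suffices for the statement.
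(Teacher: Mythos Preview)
Your proposal is correct and follows essentially the same approach as the paper: a normal form coming from the cyclic quotient, a precomputed conjugation table of size $|N|\cdot[G:N]=|G|$, an exponent-reduction step, and a $\mathrm{Fuse}$ table, yielding two $N$-multiplications plus $O(1)$ lookups. The only cosmetic differences are that the paper uses both a left and a right normal form (functions $s_L,s_R$) so that the first $N$-product appears directly in the middle before a single $Flip$, and it packages the wraparound via small tables $red_e,red_N$ rather than the single stored element $z=g^m$; neither change affects the space or time bounds.
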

\begin{proof}
Since  $G/N$ is cyclic it is generated by an element $g_0N$ where $g_0\in G$. The cosets of $N$ in $G$ are $N,g_0N,g_0^2N,\ldots,g_0^{k-1}N$ where $k$ is the order of the group $G/N$, i.e., $k=[G:N]$. Clearly, $k\leq |G|$. Let $S=\{0,1,\ldots,k-1\}$.

The set $\{g_0^0, g_0^1,\ldots,g_0^{k-1}\}$ is a left as well as a right traversal of $N$ in $G$. Hence any element $g$ could be uniquely written as $g=g_0^rn=n'g_0^r$ for some  $r \in S$ and $n,n'\in N$. This enables us to define functions $e:G\longrightarrow S$, $s_R:G\longrightarrow N$ and $s_L: G\longrightarrow N$ such that for all $g\in G$
\[g=g_0^{e(g)}s_R(g)=s_L(g)g_0^{e(g)}.\]
These three functions could be stored in arrays each having size $|G|$. 
To multiply $g_1$ and $g_2$ we first observe that $g_1g_2=g_0^{e(g_1)}s_R(g_1)s_L(g_2)g_0^{e(g_2)}$. These expression could be obtained by querying each of the functions once. The product $n_1=s_R(g_1)s_L(g_2)$ can be obtained using the data structure for $N$ within query-time $t$. Thus $g_1g_2=g_0^\alpha n_1g_0^\beta$, where $\alpha=e(g_1)$ and $\beta=e(g_2)$.

Next we define a function $Flip: N\times S\longrightarrow N$ with the property that for all $n\in N$ and $i\in S$, $ng_0^i=g_0^iFlip(n,i)$. In other words, $Flip(n,i)$ is just $g_0^{-i}ng_0^i$. This function can be stored in space linear in $|N\times S|=|G|$.
Now we can write $g_1g_2=g_0^\alpha g_0^\beta Flip(n_1,\beta)=g_0^{\alpha+\beta}n_2$ where $n_2=Flip(n_1,\beta)$. 

Next we compute $g_0^\alpha g_0^\beta=g_0^{\alpha+\beta}$. Observe that $\alpha+\beta\in \{0,1,\ldots, 2k-2\}$. We define two functions $red_e:\{0,1,\ldots, 2k-2\}\longrightarrow S$ and $red_N:\{0,1,\ldots, 2k-2\}\longrightarrow N$ such that $g_0^\ell=g_0^{red_e(\ell)}red_N(\ell)$ for all $\ell\in\{0,\ldots,2k-2\}$. Note that for $\ell<k$, $red_e(\ell)=\ell$ and $red_N(\ell)=id$. These two functions can be stored using space linear in $k$. Since $k\leq |G|$, the space required is at most linear in $G$.

Therefore, \[g_1g_2=g_0^{\alpha+\beta}n_2=g_0^{red_e(\alpha+\beta)}red_N(\alpha+\beta)n_2.\] As before the product $n_3$ of $red_N(\alpha+\beta)$ and $n_2$ can be found using the data structure for $N$. Let $red_e(\alpha+\beta)=\gamma$. Hence, $g_1g_2=g_0^\gamma n_3$.

We finally define a function $Fuse: S\times N\longrightarrow G$ as $Fuse(i,n)=g_0^in$ for all $i\in S$ and $n\in N$. Clearly, the function $Fuse$ can be stored using space linear in $|G|$. 

The product $g_1g_2$ is just $Fuse(\gamma, n_3)$.

Each function defined in this proof takes space linear in $|G|$ and the data structure for $N$ takes space at most $s$. Each function is queried exactly once and the data structure for $N$ is queried twice. This proves the theorem.   
\end{proof}
\begin{remark}
This is similar to Remark \ref{remark-subgroup-ds}. In this case too, if the data structure for $H$ in Theorem~\ref{thm-cyclic} is given then the data structure for $G$ can be computed in polynomial time.

\end{remark}

\section{Compact Data Structures for Finite Groups}\label{Compact Data structure}

Let $G$ be a group of order $n$. Our goal is to design a constant query-time data structure for $G$ of size linear in $n$.
We first consider a composition series $1=G_k\lhd \ldots G_1\lhd G_0=G$ of $G$.  In case there is a subgroup $G_i$ in the composition series with size within a constant factor of $\sqrt{n}$, we can apply Corollary~\ref{cor-near-root-n} to obtain a $(O(n),O(1))$-data structure for $G$. Otherwise we consider the smallest subgroup $G_i$ of order more than $\sqrt{n}$. Note that here $|G_{i+1}|$ is at most $\sqrt{n}$ and therefore $G_{i+1}$ will have its  Cayley table of size at most $n$. This Cayley table can be used to answer a multiplication query involving elements in $G_{i+1}$ in constant time.

Now we consider the composition factor $G_i/G_{i+1}$. This quotient is a simple group. If this is an abelian group it must be cyclic (of prime order) and we can use Theorem~\ref{thm-cyclic} to get a data structure for $G_{i}$. Then an application of Theorem~\ref{thm-quotient} with $G$ and its subgroup $G_i$ will give us the required data structure for $G$. 

The nontrivial case is when $G_i/G_{i+1}$ is nonabelian. This is where we use the Classification Theorem of Finite Simple Groups. The classification theorem allows us to split the nonabelian case into various subcases. In each of the subcases we  show that we can insert two subgroups $G_{i_2}$ and $G_{i_1}$ such that $G_{i+1}<G_{i_2}<G_{i_1}<G_i$ in such a manner that the indices $[G_{i_2}:G_{i+1}]$, $[G_{i_1}:G_{i_2}]$ and $[G_i:G_{i_1}]$ are all ``small''. Since $G_{i+1}$ already has a constant query-time data structure (namely its Cayley table) of size linear in $n$,  this allows us to use Theorem~\ref{thm-quotient} successively to the group and subgroup pairs $(G_{i_2},G_{i+1})$, $(G_{i_1},G_{i_2})$, and  $(G_{i},G_{i_1})$ to obtain a constant query-time data structure for $G_i$ of size linear in $n$. Finally,  another application of Theorem~\ref{thm-quotient} with $G$ and its subgroup $G_i$ will give us the required data structure for $G$.

\subsection{Solvable Finite Groups}
In this subsection we consider the class $\mathcal{G}_{solv}$ of finite solvable groups. We  do this case first before going to the general case for the class of all finite groups because it is independent of the Classification Theorem for Finite Simple Groups.

\begin{theorem}\label{thm-solvable}
The class $\mathcal{G}_{solv}$ has $(O(n),O(1))$-data structures.
\end{theorem}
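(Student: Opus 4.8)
The plan is to follow the general strategy outlined in Section~\ref{Compact Data structure}, exploiting the fact that every composition factor of a solvable group is cyclic of prime order. First I would fix a composition series $1 = G_k \lhd \cdots \lhd G_1 \lhd G_0 = G$, so that each quotient $G_{i}/G_{i+1}$ is simple and abelian, hence cyclic of prime order. The whole argument then reduces to locating a single subgroup $G_i$ in this series whose order straddles $\sqrt{n}$ from above, and bootstrapping a data structure for $G$ out of the Cayley table of the next smaller subgroup.

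Concretely, if some $G_i$ already satisfies $c_1\sqrt{n} \le |G_i| \le c_2\sqrt{n}$ for a suitable fixed band of constants, Corollary~\ref{cor-near-root-n} immediately yields an $(O(n),O(1))$-data structure and we are done. Otherwise I would take the smallest index $i$ with $|G_i| > \sqrt{n}$; this index exists because $|G_0| = n > \sqrt{n}$ for $n>1$ while $|G_k| = 1$, and by minimality of $i$ we have $|G_{i+1}| \le \sqrt{n}$. Consequently $G_{i+1}$ has a Cayley table of size $|G_{i+1}|^2 \le n$, which is an $(n,O(1))$-data structure answering queries in constant time.

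Next I would build the data structure upward in two steps. Since $G_{i}/G_{i+1}$ is cyclic, Theorem~\ref{thm-cyclic} applied to the normal subgroup $G_{i+1}$ of $G_i$ converts the $(n,O(1))$-data structure for $G_{i+1}$ into an $(n + c|G_i|,\, O(1))$-data structure for $G_i$; as $|G_i| \le n$ this is still $(O(n),O(1))$. Finally, Theorem~\ref{thm-quotient} applied to the pair $(G, G_i)$ gives an $(O(n) + c'([G:G_i]^2 + |G|),\, O(1))$-data structure for $G$. The one quantity I must control is the index $[G:G_i] = n/|G_i|$: because $|G_i| > \sqrt{n}$, we get $[G:G_i] < \sqrt{n}$ and hence $[G:G_i]^2 < n$, so every additive term stays linear in $n$ while the query-time, being doubled a constant number of times and incremented by constants, stays constant.

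I expect no genuine obstacle here, which is exactly why the solvable case can be settled first and independently of the Classification Theorem. The only place a difficulty could arise is if the composition factor $G_i/G_{i+1}$ were a large nonabelian simple group, in which case neither Theorem~\ref{thm-cyclic} nor a direct Cayley-table argument for $G_i$ would apply. For solvable groups this factor is always cyclic, so Theorem~\ref{thm-cyclic} applies verbatim; the only thing needing careful verification is the bookkeeping of the constants together with the index bound $[G:G_i]^2 < n$, which is what keeps the total space linear.
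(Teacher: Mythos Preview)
Your proposal is correct and follows essentially the same approach as the paper's proof: locate a composition-series subgroup $G_i$ just above $\sqrt{n}$, use the Cayley table of $G_{i+1}$, apply Theorem~\ref{thm-cyclic} to pass from $G_{i+1}$ to $G_i$, and then Theorem~\ref{thm-quotient} to pass from $G_i$ to $G$. One small slip: since $|G_0|=n$ and $|G_k|=1$, the orders decrease as the index increases, so you want the \emph{largest} index $i$ with $|G_i|>\sqrt{n}$ (equivalently, the smallest such subgroup), not the smallest index; with that correction your conclusion $|G_{i+1}|\le\sqrt{n}$ and the remainder of the argument go through exactly as you wrote.
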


\begin{proof}
Let $G$ be a group and $1=G_k\lhd \ldots G_1\lhd G_0=G$ be a composition series of $G$. Let $n=|G|$.

\emph{Case 1}: There is $i$ such that $\sqrt{n}/2 \leq |G_i| \leq \sqrt{n}$. We simply apply Corollary~\ref{cor-near-root-n} to get the desired data structure.

\emph{Case 2:} There is no $i$ such that $\sqrt{n}/2 \leq |G_i| \leq \sqrt{n}$. Let $i$ be the largest index such that $\sqrt{n}<|G_i|$. We will have $|G_{i+1}|<\sqrt{n}/2$. The Cayley table for $G_{i+1}$ has at most $n/4$  entries. Since $G$ is solvable $G_{i}/G_{i+1}$ is cyclic of prime order. This allows us to use Theorem~\ref{thm-cyclic} to obtain a constant query-time data structure for $G_i$ which is linear in $n$. Next we observe that $[G:G_{i}]$ is at most $\sqrt{n}$. If we apply Theorem~\ref{thm-quotient} on $G$ and its subgroup $G_{i}$ we get the required data structure for $G$.  
\end{proof}

Since a composition series of a group can be computed in polynomial time, it is easy to check that the data structure in Theorem~\ref{thm-solvable} can be constructed in polynomial time.
 
 \subsection{The General Case}
 Before considering the case for general finite groups we need the following result for nonabelian simple groups.
 \begin{lemma}\label{lem-simple}
 There are positive constants $b_1$ and $b_2$ such that for any nonabelian simple group $H$ there exist subgroups $H_1$ and $H_2$ such that $1\leq H_2 \leq H_1 \leq H$ and $|H_2|\leq \sqrt{|H|}$, $[H:H_1]\leq b_1\sqrt{|H|}$, and  $[H_1:H_2]\leq b_2\sqrt{|H|}$.
 
 \end{lemma}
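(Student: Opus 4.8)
The plan is to prove the lemma by invoking the Classification Theorem (stated in Section~\ref{proof-of-lemma}) and handling the families of nonabelian simple groups one at a time. In each case I will exhibit an explicit chain $1\le H_2\le H_1\le H$ and control the three ``gaps'' $|H_2|$, $[H_1:H_2]$, and $[H:H_1]$, whose product is exactly $|H|$. Because these three quantities multiply to $|H|$, the whole difficulty is to prevent any one of them from swallowing more than roughly half of $\log|H|$: I need each gap to stay below $\sqrt{|H|}$, with the constants $b_1,b_2$ providing a little slack on the two index gaps and none on $|H_2|$. Equivalently, I want to factor $|H|$ through a two-step subgroup chain into three nearly balanced pieces.

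For the alternating groups $H=A_m$ ($m\ge 5$) this is elementary. Viewing $A_t\le A_{t+1}\le A_m$ as the pointwise stabilizers of the last $m-t$ and $m-t-1$ letters, I take $t$ to be the largest integer with $t!\le\sqrt{2\,m!}$ and set $H_2=A_t$, $H_1=A_{t+1}$. Then $|H_2|=t!/2\le\sqrt{m!/2}=\sqrt{|H|}$ by the choice of $t$; the maximality of $t$ gives $(t+1)!>\sqrt{2\,m!}$, hence $[H:H_1]=m!/(t+1)!<\sqrt{m!/2}=\sqrt{|H|}$; and $[H_1:H_2]=t+1\le m\le\sqrt{|H|}$. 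So $b_1=b_2=1$ suffices here, and the only bookkeeping is to check the finitely many small $m$.

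For the groups of Lie type the natural candidates come from the Borel subgroup $B=U\rtimes T$ (with $U$ the Sylow $p$-subgroup and $T$ a maximal torus) and, more generally, from parabolic subgroups $P=U_P\rtimes L$ via their Levi decompositions. Writing $|H|\approx q^{2N+\ell}$, with $N$ the number of positive roots and $\ell$ the rank, so that $\sqrt{|H|}\approx q^{N+\ell/2}$, the chain $1\le U\le B\le H$ already works whenever the rank is bounded: there $|U|=q^N\le\sqrt{|H|}$, $[B:U]=|T|\approx q^\ell\le\sqrt{|H|}$ (as $\ell\le N$), and $[H:B]\approx q^N\le\sqrt{|H|}$, with all implied constants controlled by the bounded rank and hence uniform in $q$. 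This disposes of every exceptional group and every classical group of small rank. The obstruction is unbounded rank over a small field, most sharply the split classical groups over $\mathbb{F}_2$: there the torus collapses ($T=1$, so $B=U$) and $[H:B]$ overshoots $\sqrt{|H|}$ by the unbounded factor $q^{\ell/2}$, so the Borel is useless.

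I expect this small-field, large-rank case to be the main obstacle, and my plan is to replace the Borel by a \emph{balanced} maximal parabolic $P$ — the stabilizer of a subspace of dimension about half the natural module, taken totally isotropic in the symplectic, orthogonal, and unitary types — and use the chain $1\le U_P\le P\le H$ with $H_2=U_P$ and $H_1=P$. The point of balancing is that the Levi $L$ then has order within a constant factor of $\sqrt{|H|}$, which is exactly the tight requirement $[P:U_P]=|L|\le b_2\sqrt{|H|}$; meanwhile the unipotent radical satisfies $|U_P|\le\sqrt{|H|}$ with room to spare, and the parabolic index $[H:P]$, being a Gaussian-binomial-type subspace count, stays well below $\sqrt{|H|}$. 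Verifying these three inequalities with genuinely uniform constants $b_1,b_2$, simultaneously for the linear, unitary, symplectic, and orthogonal families and every $q\ge 2$, is the technical heart of the argument and amounts to routine but careful polynomial-in-$q$ order estimates. Finally, the $26$ sporadic groups are finite in number and contribute nothing beyond enlarging $b_1,b_2$ to absorb them, so the two universal constants are obtained by taking the maximum over all the cases above.
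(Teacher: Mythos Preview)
Your overall strategy coincides with the paper's: a CFSG case split, with alternating groups handled via nested $A_t\le A_{t+1}$, Lie type groups handled primarily through the Borel chain $U\le B\le H$ (the paper's ``Method~1''), and sporadics absorbed into the constants. The genuine point of divergence is your remedy for the classical groups over $\mathbb{F}_2$. The paper's ``Method~2'' takes $H_1$ to be a maximal subgroup of \emph{smallest} index (for $A_m(2)$ the stabilizer of a $1$-space, of index $2^{m+1}-1$) and $H_2$ its Sylow $2$-subgroup, which is in fact the full Sylow $2$-subgroup of $H$; in that chain $|H_2|$ and $[H_1:H_2]$ are each close to $\sqrt{|H|}$ while $[H:H_1]$ is tiny. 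You instead take $H_1=P$ a \emph{balanced} maximal parabolic and $H_2=U_P$ its unipotent radical; here the tight gap is the Levi order $[P:U_P]=|L|\sim\sqrt{|H|}$, while $|U_P|$ and $[H:P]$ are around $|H|^{1/4}$. Both chains work. The paper's choice has the practical advantage that the minimal-index maximal subgroups of the classical groups are explicitly tabulated (Kleidman--Liebeck), which makes the order verifications and the polynomial-time constructibility claim in the final remark direct; your choice is more uniform and conceptually cleaner, and would let you treat all classical groups without a separate $q=2$ branch.

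One gap to close in your write-up: as stated, your Borel argument is justified only for bounded rank (``constants controlled by the bounded rank''), so the classical groups of unbounded rank over $\mathbb{F}_q$ with $q\ge 3$ are not yet covered by anything you wrote. The paper handles these with the Borel too, via the sharp estimate $[H:B]^2/|H|\le c\,(q/(q-1)^2)^{\ell}$, which is bounded for all $q\ge 3$ and all ranks $\ell$ by Remark~\ref{Remark-1}; alternatively your balanced-parabolic chain would cover them just as well. Either fix is routine, but one of them must be made explicit for the argument to be complete.
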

 \begin{proof}
 The proof uses the Classification Theorem of Finite Simple Group (CFSG). The proof idea is given in Section~\ref{proof-of-lemma} and the details are given in the Appendix.
\end{proof}
 
 Next we prove the main theorem of the paper. We note that Case 2 in the proof of the following theorem can be viewed as a generalized version of the problem of designing linear space and constant query-time data structure for nonableian simple groups. 
 
 \begin{theorem}\label{ds-for-finite-group}
 The class $\mathcal{G}_{fin}$ of all finite groups has $(O(n),O(1))$-data structures.
 \end{theorem}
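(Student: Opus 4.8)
The plan is to follow the roadmap laid out in the discussion preceding the theorem: fix a composition series $1=G_k\lhd\cdots\lhd G_1\lhd G_0=G$ of $G$ with $n=|G|$ and locate a single ``cut'' at which a subgroup possessing a small Cayley table can be grown into a data structure for all of $G$ using the two extension theorems. First I would dispose of the easy case: if some term $G_i$ satisfies $c_1\sqrt{n}\leq|G_i|\leq c_2\sqrt{n}$ for the constants of Corollary~\ref{cor-near-root-n}, that corollary immediately yields an $(O(n),O(1))$-data structure and we are done. The real work is the complementary case.

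There I would take $i$ to be the largest index with $|G_i|>\sqrt{n}$, so that $|G_{i+1}|\leq\sqrt{n}$. Two observations frame the argument. First, $G_{i+1}$ has a Cayley table with at most $n$ entries, hence an $(O(n),O(1))$-data structure. Second, $|G_i|>\sqrt{n}$ gives $[G:G_i]=n/|G_i|<\sqrt{n}$, so once $G_i$ has an $(O(n),O(1))$-data structure, a single application of Theorem~\ref{thm-quotient} (space overhead $c([G:G_i]^2+|G|)=O(n)$) transfers it to $G$. Everything thus reduces to building an $(O(n),O(1))$-data structure for $G_i$ out of the one for $G_{i+1}$, crossing the simple composition factor $H=G_i/G_{i+1}$, which has $|H|\leq n$ and so $\sqrt{|H|}\leq\sqrt{n}$.

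If $H$ is abelian it is cyclic of prime order, and since $G_{i+1}\trianglelefteq G_i$, Theorem~\ref{thm-cyclic} grows the data structure for $G_{i+1}$ into one for $G_i$ with only $O(|G_i|)=O(n)$ extra space and $O(1)$ query time. If $H$ is nonabelian I would invoke Lemma~\ref{lem-simple}, which supplies $1\leq H_2\leq H_1\leq H$ with $|H_2|\leq\sqrt{|H|}$, $[H:H_1]\leq b_1\sqrt{|H|}$, and $[H_1:H_2]\leq b_2\sqrt{|H|}$. By the Correspondence Theorem these lift to a chain $G_{i+1}\leq G_{i_2}\leq G_{i_1}\leq G_i$ with $G_{i_2}/G_{i+1}=H_2$ and $G_{i_1}/G_{i+1}=H_1$, and the indices are preserved: $[G_{i_2}:G_{i+1}]=|H_2|$, $[G_{i_1}:G_{i_2}]=[H_1:H_2]$, and $[G_i:G_{i_1}]=[H:H_1]$, each therefore $O(\sqrt{|H|})=O(\sqrt{n})$. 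Consequently every squared index is $O(n)$, and applying Theorem~\ref{thm-quotient} successively to $(G_{i_2},G_{i+1})$, $(G_{i_1},G_{i_2})$, and $(G_i,G_{i_1})$ adds only $O(n)$ space per step while doubling the query time and adding a constant; three such steps keep the bound at $(O(n),O(1))$.

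Granting Lemma~\ref{lem-simple}, the only thing to watch is the bookkeeping along the chain, and the main point of the construction is precisely the reason two intermediate subgroups are inserted rather than one: a direct application of Theorem~\ref{thm-quotient} to $(G_i,G_{i+1})$ would incur a space overhead of $c([G_i:G_{i+1}]^2+|G_i|)=c(|H|^2+|G_i|)$, and $|H|$ can be as large as $\Theta(n)$, so the quadratic index term $|H|^2$ would blow the linear space budget. Lemma~\ref{lem-simple} is phrased in terms of $\sqrt{|H|}$ exactly so that, after splitting the factor $H$ into three pieces each of ``size'' $O(\sqrt{|H|})$, every squared index stays $O(|H|)=O(n)$. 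The genuinely hard ingredient is Lemma~\ref{lem-simple} itself, whose proof rests on CFSG; everything else is a finite, constant-depth assembly using the two extension theorems.
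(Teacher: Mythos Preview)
Your proof is correct and follows essentially the same route as the paper's: locate the cut in the composition series, store the Cayley table of $G_{i+1}$, cross the simple factor $H=G_i/G_{i+1}$ via Theorem~\ref{thm-cyclic} in the cyclic case or via Lemma~\ref{lem-simple} together with the Correspondence Theorem and three applications of Theorem~\ref{thm-quotient} in the nonabelian case, and finish with one more application of Theorem~\ref{thm-quotient} to pass from $G_i$ to $G$. The only cosmetic difference is that the paper fixes the explicit constants $c_1=1/2$, $c_2=1$ when invoking Corollary~\ref{cor-near-root-n} (your phrase ``the constants of Corollary~\ref{cor-near-root-n}'' is slightly imprecise since that corollary is stated for arbitrary $0<c_1\le c_2$); in fact, since your bound $|G_{i+1}|\le\sqrt{n}$ already yields a Cayley table of size at most $n$, the ``easy case'' is not even strictly needed for the asymptotic conclusion.
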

 \begin{proof}
 Let $G$ be a group of order $n$. We start by  considering a composition series $1=G_k\lhd \ldots G_1\lhd G_0=G$ be a composition series of  $G$. 
 
 \emph{Case 1:} This is the case when there is $i$ such that $\sqrt{n}/2 \leq |G_i| \leq \sqrt{n}$. This case is exactly similar to the case for solvable groups.
 
 \emph{Case 2:} As before in this case we assume that there is no composition series element $G_i$ with order more that $\sqrt{n}/2$ but less than $\sqrt{n}$. Let $i$ be the largest index such $\sqrt{n}<|G_i|$. We will then have $|G_{i+1}|< \sqrt{n}/2$. Clearly, the Cayley table of $G_{i+1}$ will have at most $n/4$ entries. Since $[G:G_i]<\sqrt{n}$, by Theorem~\ref{thm-quotient} it is enough to design constant query-time data structure for $G_i$ of size linear in $n$. In the rest of the proof we therefore concentrate on designing a constant query-time data structure for $G_i$ that uses $O(n)$ space.
 
  If the composition factor $G_i/G_{i+1}$ is abelian then we are again in the same situation as in the second case of solvable groups. Therefore we assume that $G_i/G_{i+1}$ is nonabelian. 
 
 We apply Lemma~\ref{lem-simple} to $H=G_i/G_{i+1}$ to obtain subgroups $H_1$ and $H_2$ such that $1\leq H_2 \leq H_1 \leq H=G_i/G_{i+1}$. By the correspondence theorem of groups, $H_1$ and $H_2$ will be of the form $G_{i_1}/G_{i+1}$ and $G_{i_2}/G_{i+1}$ respectively for some subgroups $G_{i_1}$ and $G_{i_2}$ such that $G_{i+1} \leq G_{i_2}\leq G_{i_1}\leq G_i$.
 From Lemma~\ref{lem-simple} we have $[H_1:H_2]\leq b_2 |H|$. Since, $H_1=G_{i_1}/G_{i+1}$ and $H_2=G_{i_2}/G_{i+1}$, we have $[G_{i_1}/G_{i+1}: G_{i_2}/G_{i+1}]\leq b_2\sqrt{|G_i/G_{i+1}|}\leq b_2\sqrt{n}$.
 
 Therefore, $[G_{i_1}:G_{i_2}]\leq b_2 \sqrt{n}$. Similarly, $[G_{i}:G_{i_1}]\leq b_1\sqrt{n}$. Again from Lemma~\ref{lem-simple}, we have $H_2\leq \sqrt{|H|}$. This implies, $[G_{i_2}:G_{i+1}]\leq \sqrt{|G_i/G_{i+1}|}\leq \sqrt{n}$.
 
 Since $G_{i+1}$ has a Cayley table of size at most $n$ and $[G_{i_2}:G_{i+1}]\leq \sqrt{n}$, we will have a constant query-time data structure for the subgroup $G_{i_2}$ of size at most $n$ by Theorem~\ref{thm-quotient}. Since $[G_{i_1}:G_{i_2}]\leq b_2 \sqrt{n}$ and $[G_{i}:G_{i_1}]\leq b_1\sqrt{n}$, another two applications of Theorem~\ref{thm-quotient} with the group and subgroup pairs $(G_{i_1},G_{i_2})$ and $(G_{i},G_{i_1})$ will give a data structure for $G_i$ of size linear in $n$ which can answer a multiplication query in constant time.   
 \end{proof}
  

A discussion on how to construct the above data structure in polynomial time is given at the end of Section \ref{proof-of-lemma}.

\section{Proof Sketch for Lemma~\ref{lem-simple}}\label{proof-of-lemma}
  In this section we sketch the proof idea behind Lemma~\ref{lem-simple}. We first state the Classification Theorem of Finite Simple Groups.
 
\begin{theorem}[\cite{WLS}]{(The Classification Theorem of Finite Simple Group)}\label{classification-thm}

Every finite simple group is isomorphic to one of the following:
\begin{enumerate}
    \item[(i)] a cyclic group $C_p$ of prime order $p$;
    \item[(ii)] an alternating group $A_m,$ \,for $m \geq 5$;
    \item[(iii)]  a classical group;
    \begin{enumerate}
        \item linear:\,\,\quad \,\,\, $ A_{\rm m}(q) ( \text{or}\,\, { \rm PSL}_{{\rm m}+1}(q) ), {\rm m} \geq 1$, except ${\rm PSL}_2(2)$ and ${\rm PSL}_2(3)$;
        \item unitary: \quad \, $^2A_{\rm m}(q^2) ( \text{or}\, {\rm PSU}_{{\rm m}+1}(q))\,  , {\rm m} \geq 2,$  except ${ \rm PSU}_{3}(2)$;
        \item symplectic:  $C_{\rm m}(q)) ( \text{or~}\, {\rm PS}_{{\rm p}_{2{\rm m}}}(q)) , {\rm m} \geq 2$  except ${{\rm PS}_{p}}_{4}(2)$;
        \item orthogonal: $B_{\rm m}(q)  ( \text{or~}\,{\rm  P}\Omega_{2{\rm m}+1}(q)) , {\rm m} \geq 3, q$ odd;

        \hspace{2cm}$D_{\rm m}(q) ( \text{or~}\, {\rm P} \Omega_{2{\rm m}}^{+}(q))  , {\rm m} \geq 4$;
        
        \hspace{2cm}$^2D_{\rm m}(q^2)  ( \text{or~}\, {\rm P}\Omega_{2{\rm m}}^{-}(q))\,, {\rm m} \geq 4$      
    \end{enumerate}
    where $q$ is a power $p^a$ of some prime;
    \item[(iv)] an exceptional group of Lie type:
    $$\rm{G_2(q), q \geq3; F_4(q); E_6(q); ^2E_6(q); ^3D_4(q); E_7(q); E_8(q) \,\, \text{or} }$$ 
    where $q$ is a power $p^a$ of some prime;
    $$\rm{^2B_2(2^{2m+1})}, {\rm m} \geq 1; ^2G_2(3^{2{\rm m}+1}), {\rm m} \geq 1; ^2F_4(2^{2{\rm m}+1}), {\rm m} \geq 1 $$ 
    or the Tits group $^2F_4(2)^{'}$;
    \item[(v)] one of 26 sporadic simple groups:
    \begin{enumerate}
        \item the five Mathieu groups $\rm {M_{11}, M_{12}, M_{22}, M_{23}, M_{24}}$;
        \item the seven Leech Lattice groups $\rm{Co_1, Co_2, Co_3, McL, HS, Suz, J_2}$;
        \item the three Fischer groups $\rm {Fi_{22}, Fi_{23}, Fi_{24}^{'};}$
        \item the Monstrous groups $\mathbb{M}, \mathbb{B},\rm{ Th, HN, He}$;
        \item the six pariahs $\rm{J_1, J_2, J_4, O'N, Ly, Ru.}$
    \end{enumerate}
\end{enumerate}
\end{theorem}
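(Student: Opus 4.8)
The statement is the Classification Theorem of Finite Simple Groups, which in this paper is invoked as an external result (see \cite{WLS}); it is among the deepest theorems in all of mathematics, and its only known proof is a collaborative effort of several hundred mathematicians spanning tens of thousands of journal pages. No genuinely self-contained argument can be compressed into a sketch, so what follows is the high-level architecture of the proof (in the form later organized by Gorenstein, Lyons, and Solomon) rather than a tractable plan I could actually carry out.

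The plan is to argue by induction on the group order. Suppose, for contradiction, that the theorem fails, and let $G$ be a finite simple group of \emph{minimal} order not isomorphic to any group in the list. Then every proper simple section of $G$ already appears on the list, so the local subgroups of $G$ (normalizers of nontrivial $p$-subgroups) are assembled from \emph{known} simple groups; the entire proof is an exercise in leveraging this knowledge about proper sections to reconstruct $G$ itself. The first essential input is the Feit--Thompson Odd Order Theorem: every group of odd order is solvable. Consequently $G$ has even order and contains an involution $t$, and a theorem of Brauer and Fowler shows that the isomorphism type of $G$ is tightly constrained by the centralizer $C_G(t)$. The strategy is therefore to pin down the structure of involution centralizers and, more generally, of the $2$-local subgroups of $G$.

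Next I would invoke the central dichotomy that splits the analysis into two worlds. Using signalizer functor theory together with the component theorem (Gorenstein--Walter, Aschbacher), one shows that $G$ is either of \emph{component type} --- some involution centralizer has a quasisimple component, which forces $G$ to resemble an alternating group, a sporadic group, or a group of Lie type in odd characteristic --- or of \emph{characteristic $2$-type}, where every $2$-local subgroup $M$ satisfies $C_M(O_2(M)) \le O_2(M)$, which forces $G$ to be a group of Lie type in characteristic $2$. In the component-type case the plan is to classify the possible standard components and then apply recognition theorems to identify $G$. In the characteristic $2$-type case one subdivides by $2$-local rank, handling the ``small'' (quasithin) groups by the Aschbacher--Smith analysis and the ``large'' groups by the trichotomy theorem and the amalgam method. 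In both worlds the final step is an explicit identification: having determined the local structure, one recognizes $G$ as a specific group via a $(B,N)$-pair / Curtis--Tits--Phan style presentation, or via a sporadic-group characterization, contradicting the assumption that $G$ is a counterexample and completing the induction.

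The main obstacle is that essentially every step above is itself a major theorem. The single hardest ingredients are the signalizer functor machinery (which controls the $p$-local structure), the quasithin classification (which alone occupies well over a thousand pages), and the final uniqueness and recognition theorems; each would have to be developed from scratch. For this reason, and consistent with standard practice, the theorem is used here as a black box via the citation to \cite{WLS}. The contribution of the present paper lies not in reproving it but in extracting from it the precise combinatorial consequence about subgroup indices needed for Lemma~\ref{lem-simple}.
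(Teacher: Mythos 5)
Your proposal is correct and matches the paper's treatment exactly: the paper states the Classification Theorem as an external result cited from \cite{WLS} and offers no proof of it, using it only as a black box to establish Lemma~\ref{lem-simple}. Your summary of the actual proof architecture (minimal counterexample, Feit--Thompson, involution centralizers, the component-type versus characteristic-$2$-type dichotomy, and the recognition theorems) is an accurate account of the literature, and your conclusion that the result must be invoked rather than reproved is precisely the stance the paper takes.
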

The definition of each of the group classes mentioned in the above theorem can be found in the standard texts on CFSG (see e.g., \cite{CA}, \cite{WLS}, \cite{ASH}).  

Since Lemma~\ref{lem-simple} is about nonabelian simple groups we need to consider cases (ii) to (v)  in Theorem~\ref{classification-thm}. We take each subcases under these cases and show that there are subgroups $H_1$ and $H_2$ satisfying the conditions of the lemma.

 We note that the 26 sporadic simple groups listed in the case (v) are of constant sizes. Therefore, we can ignore these groups for the purpose of the proof by simply taking $H_2$ to be the identity subgroup and $H_1$ to be $H$. Of course if we do so we need to pick extremely large constant $b_2$ as some the sporadic simple groups are of huge sizes. Fortunately, there are known results on the groups listed under case (v) that helps us to keep the constants  $b_1$ and $b_2$ under $5$.

 For an  alternating group $A_m$ (case (ii)) it is easy to show that $H_1=A_i$ and $H_2=A_{i-1}$ for some suitably chosen $i<m$ does the job. The details are in the Appendix.

For the remaining groups we use the following two methods for the choices of $H_1$ and $H_2$. The methods are as follows:

\begin{enumerate}
    \item \textbf{Method 1:} In this method, we first choose $H_2$ to be a certain Sylow subgroup of the given simple group $H$. Next we pick $H_1$ to be the normalizer of $H_2$ in $H$ or the Borel subgroup containing $H_2$.  
    
    ~
    
    \emph{Example:} Let us take $H$ to be a simple group $A_m(q)$ for some $q>2$ which appears in case (iii) of Theorem~\ref{classification-thm}. Here $q$ is power of some prime $p$. It is known that $A_m(q)$ has order $q^{{m(m+1)}/{2} }  \prod_{i=1}^{m} (q^{i+1}-1)/{(q-1,m+1)}$ where $(q-1,m+1)$ denotes the gcd of $q-1$ and $m+1$ (see \cite{ASH}, p. 252). Clearly, $H$ will have a Sylow  $p$-subgroup of order $q^{{m(m+1)}/{2}}$. We set  $H_2$ to be this subgroup. Next we pick $H_1$ to be the normailzer of $H_2$ in $H$. It is also known that the order of $H_1$ is $q^{{m(m+1)}/{2} }(q-1)^m$ (see \cite{WLS}, p. 46). One can check that with $b_1=2$ and $b_2=1$, these choices satisfy the conditions of Lemma~\ref{lem-simple} (see Appendix for the details).
    
    ~
    
    \item \textbf{Method 2:} In this method, we choose $H_1$ to be a maximal subgroup of the simple group $H$ and $H_2$ to certain Sylow subgroup of $H_1$.
    
    ~
    
    \emph{Example:} In the example under Method 1 we consider the case for $A_m(q)$ when $q>2$. In this example we take the case when $q=2$. Here  $H=A_m(q)$ will have order  $2^{m(m+1)/2}\prod_{i=1}^{m} (2^{i+1}-1)$ (see \cite{ASH}, p. 252). It is known that the maximal subgroup of $H$ is of order $|H|/(2^m-1)$ (see \cite{KL}, p.  175). We take this subgroup as $H_1$. Next we take $H_2$ as a Sylow $2$-subgroup of $H_1$ which has order $2^{m(m+1)/2}$. It is easy to verify that these choices of $H_1$ and $H_2$ along with $b_1=b_2=1$ satisfy the conditions of Lemma~\ref{lem-simple} (see Appendix for the details).
     
\end{enumerate}

Table \ref{Table_1} lists the methods that we have used for choosing the suitable subgroups in the corresponding nonabelian simple group. The last two columns represent the constant factors $b_1$ and $b_2$ for the corresponding simple group (see Table \ref{Table_1}). 

For case (v), we use Method 2 to get the suitable subgroups.

\begin{table}
\centering
\begin{tabular}{|l|l|l|l|l|l|}
\hline
Case & $H$ & Condition on $q$ & Method  & $ b_1 $ & $ b_2 $ \\ \hline
\multirow{13}{*}{(iii)} & \multirow{2}{*}{$A_{m}(q)$} & $q>2$ & Method 1 & $2$ & $1$  \\ \cline{3-6} 
 &  & $q=2$ & Method 2 & $1$  & $1$   \\ \cline{2-6} 
 & \multirow{3}{*}{$ ^2A_m(q^2); m >1$} & $q>2$ & Method 1  & $2$ & $1$  \\ \cline{3-6} 
 & & $q=2; 6 \nmid (m-1)$  & Method 2 & $1$& $ 1 $\\ \cline{3-6} 
 &  & $q=2; 6 \mid (m-1) $  & Method 2 & $1$ &  $1$ \\ \cline{2-6}
 & \multirow{2}{*}{$C_{m}(q); m>2$} & $q>2$ & Method 1 & $2$  & $1$  \\ \cline{3-6} 
 &  & $q=2$ & Method 2 & $1$ & $1$    \\ \cline{2-6} 
 & $B_{m}(q); m>1$ & $q$ odd & Method 1 & $2$ & $1$  \\ \cline{2-6} 
 & \multirow{2}{*}{$D_{m}(q); m>3$} & $q>2$ & Method 1 & $2$  &  $1$ \\ \cline{3-6} 
 &  & $q=2$ & Method 2 & $1$ &    $1$\\ \cline{2-6}
 & \multirow{2}{*}{$^2D_m(q^2); m>3$} & $q>2$ & Method 1 & $3$ & $1$  \\ \cline{3-6} 
 &  & $q=2$ & Method 2 & $1$ &  $1$  \\ \cline{1-6} 
 
 \multirow{14}{*}{(iv)} & $G_2(q)$ & $q \geq 3$ & Method 1 & $1$ & $1$  \\ \cline{2-6} 
 & $F_{4}(q)$  & All $q$ & Method 2 & $1$ & $1$   \\ \cline{2-6}
 &\multirow{2}{*}{ $E_{6}(q)$}  & $q>2$ & Method 1 & $1$ & $1$   \\ \cline{3-6}
 &  & $q=2$ & Method 2 & $ 1 $ & $ 1 $   \\ \cline{2-6}
 
 & $^2E_6(q)$  & All $q$ & Method 1 & $1$ & $1$   \\ \cline{2-6}
 & $^3D_4(q)$  & All $q$ & Method 1 & $1$ & $1$   \\ \cline{2-6}
 
 &\multirow{2}{*}{$E_{7}(q)$} & $q>2$ & Method 1 & $1$ & $1$   \\ \cline{3-6}
 &   & $q=2$ & Method 2 & $ 1 $ & $ 1 $   \\ \cline{2-6}
 
 &\multirow{2}{*}{$E_{8}(q)$}  & $q>2$ & Method 1 & $1$ & $1$   \\ \cline{3-6}
 &   & $q=2$ & Method 2 & $ 1 $ & $ 1 $   \\ \cline{2-6}
 
 & $^2B_2(q)$  & $q=2^{2t+1},t \geq 1$ & Method 1 & $1$ & $1$   \\ \cline{2-6}
 & $^2G_2(q)$   & $q=3^{2t+1},t \geq 1$ & Method 1 & $1$ & $1$    \\ \cline{2-6}
 & $^2F_4(q)$  & $q=2^{2t+1}, t \geq 1 $ & Method 1 & $1$  & $1$   \\ \cline{2-6}
 & $^2F_4(2)'$  & $q=2$ & Method 2 &  $1$  & $ 1 $   \\ \cline{1-6}
\end{tabular}

\caption{Table representing the constant factor and method used for choosing suitable subgroups}
\label{Table_1}
\end{table}

In Appendix, Section ~\ref{tables} contains two comprehensive tables listing the orders of subgroups used in the proof of Lemma \ref{lem-simple} for different cases of CFSG.

\begin{remark}
Now we discuss how to construct the data structure given in Theorem \ref{ds-for-finite-group} in polynomial time.
Apart from some simple polynomial time computations, one can check that to construct the data structure in polynomial time it is enough to perform the following four tasks in polynomial time.

\begin{enumerate}
\item Computing a composition series,
\item Computing a Sylow $p$-subgroup,
\item Computing the normalizer of a subgroup, and 
\item Computing maximal subgroups of simple groups. 
\end{enumerate}

It is well known that the tasks (1), (2) and (3) can be performed in polynomial time even for permutation groups (See e.g., \cite{Ako-Seress}). For (4), we note that any maximal subgroup of a simple group is generated by at most 4 elements \cite{maximal-simple}. Therefore, we can in fact enumerate all the maximal subgroups of a simple group given by its Cayley table in polynomial time.

There is one more subtlety: How do we determine which method (Method 1 or Method 2) to apply to get the correct subgroups as required by Lemma 11 and Theorem 12? There are two ways to address this issue. a) We just apply both the methods. One of them is bound to produce the subgroups of required sizes by Lemma 11. b) We can identify the type of the simple group by doing isomorphism tests. Note that the isomorphism of simple groups can be tested in polynomial time as any simple group is generated by just two elements. We also need the well-known fact that there are at most 2 simple groups of any given order.

\end{remark}

\bibliographystyle{plainurl}

\bibliography{LinearDataStructure}

\begin{thebibliography}{10}

\bibitem{AI}
N.~Ahanjideh and A.~Iranmanesh.
\newblock On the {S}ylow normalizers of some simple classical groups.
\newblock {\em Bull. Malays. Math. Sci. Soc. (2)}, 35(2):459--467, 2012.

\bibitem{Arvind-Jacobo}
Vikraman Arvind and Jacobo Tor{\'a}n.
\newblock The complexity of quasigroup isomorphism and the minimum generating
  set problem.
\newblock In {\em International Symposium on Algorithms and Computation}, pages
  233--242. Springer, 2006.

\bibitem{Jacobo-Arvind}
Vikraman Arvind and Jacobo Tor{\'{a}}n.
\newblock Solvable group isomorphism is (almost) in {NP} {\(\cap\)} conp.
\newblock {\em {ACM} Trans. Comput. Theory}, 2(2):4:1--4:22, 2011.

\bibitem{ASH}
M.~Aschbacher.
\newblock {\em Finite Group Theory}.
\newblock Cambridge Studies in Advanced Mathematics. Cambridge University
  Press, 2 edition, 2000.

\bibitem{Babai}
L{\'a}szl{\'o} Babai, Paolo Codenotti, and Youming Qiao.
\newblock Polynomial-time isomorphism test for groups with no abelian normal
  subgroups.
\newblock In {\em International Colloquium on Automata, Languages, and
  Programming}, pages 51--62. Springer, 2012.

\bibitem{Babai-1}
L{\'a}szl{\'o} Babai and Youming Qiao.
\newblock Polynomial-time isomorphism test for groups with abelian sylow
  towers.
\newblock In {\em STACS'12 (29th Symposium on Theoretical Aspects of Computer
  Science)}, volume~14, pages 453--464. LIPIcs, 2012.

\bibitem{maximal-simple}
Timothy~C Burness, Martin~W Liebeck, and Aner Shalev.
\newblock Generation and random generation: from simple groups to maximal
  subgroups.
\newblock {\em Advances in Mathematics}, 248:59--95, 2013.

\bibitem{CA}
Roger~W. Carter.
\newblock {\em Finite groups of {L}ie type}.
\newblock Wiley Classics Library. John Wiley \& Sons, Ltd., Chichester, 1993.
\newblock Conjugacy classes and complex characters, Reprint of the 1985
  original, A Wiley-Interscience Publication.

\bibitem{ATLAS}
J.~H. Conway, R.~T. Curtis, S.~P. Norton, R.~A. Parker, and R.~A. Wilson.
\newblock {\em {ATLAS} of Finite Groups}.
\newblock Oxford University Press, Eynsham, 1985.
\newblock Maximal subgroups and ordinary characters for simple groups, With
  computational assistance from J. G. Thackray.

\bibitem{WALCOM}
Bireswar Das and Shivdutt Sharma.
\newblock Compact data structures for dedekind groups and finite rings.
\newblock In {\em WALCOM}, pages 90--102, 2021.

\bibitem{JCSS}
Bireswar Das, Shivdutt Sharma, and P.~R. Vaidyanathan.
\newblock Space efficient representations of finite groups.
\newblock {\em J. Comput. Syst. Sci.}, 114:137--146, 2020.

\bibitem{Algebra}
David~S. Dummit and Richard~M. Foote.
\newblock {\em Abstract algebra}.
\newblock John Wiley \& Sons, Inc., Hoboken, NJ, third edition, 2004.

\bibitem{Farzan}
Arash Farzan and J.~Ian Munro.
\newblock Succinct representation of finite abelian groups.
\newblock In {\em I{SSAC} 2006}, pages 87--92. ACM, New York, 2006.

\bibitem{John-Hopcroft}
Merrick Furst, John Hopcroft, and Eugene Luks.
\newblock Polynomial-time algorithms for permutation groups.
\newblock In {\em 21st Annual Symposium on Foundations of Computer Science
  (sfcs 1980)}, pages 36--41. IEEE, 1980.

\bibitem{Le-Gall}
Francois~Le Gall.
\newblock Efficient isomorphism testing for a class of group extensions.
\newblock {\em arXiv preprint arXiv:0812.2298}, 2008.

\bibitem{Kavitha-T}
T.~Kavitha.
\newblock Linear time algorithms for abelian group isomorphism and related
  problems.
\newblock {\em J. Comput. System Sci.}, 73(6):986--996, 2007.

\bibitem{Kayal-Neeraj}
Neeraj Kayal and Timur Nezhmetdinov.
\newblock Factoring groups efficiently.
\newblock In {\em International colloquium on automata, languages, and
  programming}, pages 585--596. Springer, 2009.

\bibitem{KL}
Peter~B. Kleidman and Martin~W. Liebeck.
\newblock {\em The Subgroup Structure of the Finite Classical Groups}.
\newblock London Mathematical Society Lecture Note Series. Cambridge University
  Press, 1990.

\bibitem{Daniel-Rothschild}
Daniel~J. Kleitman, Bruce~R. Rothschild, and Joel~H. Spencer.
\newblock The number of semigroups of order {$n$}.
\newblock {\em Proc. Amer. Math. Soc.}, 55(1):227--232, 1976.

\bibitem{Kumar-Ravi}
S~Ravi Kumar and Ronitt Rubinfeld.
\newblock Property testing of abelian group operations, 1998.

\bibitem{Miller}
Gary~L. Miller.
\newblock On the {$n^{\log n}$} isomorphism technique: {A} preliminary report.
\newblock In Richard~J. Lipton, Walter~A. Burkhard, Walter~J. Savitch, Emily~P.
  Friedman, and Alfred~V. Aho, editors, {\em Proceedings of the 10th Annual
  {ACM} Symposium on Theory of Computing, May 1-3, 1978, San Diego, California,
  {USA}}, pages 51--58. {ACM}, 1978.

\bibitem{Jayalal-Sarma}
Youming Qiao, Jayalal Sarma, and Bangsheng Tang.
\newblock On isomorphism testing of groups with normal hall subgroups.
\newblock {\em J. Comput. Sci. Technol.}, 27(4):687--701, 2012.

\bibitem{rot}
Joseph~J. Rotman.
\newblock {\em An introduction to the theory of groups}, volume 148 of {\em
  Graduate Texts in Mathematics}.
\newblock Springer-Verlag, New York, fourth edition, 1995.

\bibitem{Ako-Seress}
\'{A}kos Seress.
\newblock {\em Permutation group algorithms}, volume 152 of {\em Cambridge
  Tracts in Mathematics}.
\newblock Cambridge University Press, Cambridge, 2003.

\bibitem{sims3}
Charles~C. Sims.
\newblock Computational methods in the study of permutation groups.
\newblock In John Leech, editor, {\em Computational Problems in Abstract
  Algebra}, pages 169--183. Pergamon, 1970.

\bibitem{sims2}
Charles~C Sims.
\newblock Computation with permutation groups.
\newblock In {\em Proceedings of the second ACM symposium on Symbolic and
  algebraic manipulation}, pages 23--28, 1971.

\bibitem{sims1}
Charles~C. Sims.
\newblock {\em Computation with finitely presented groups}, volume~48 of {\em
  Encyclopedia of Mathematics and its Applications}.
\newblock Cambridge University Press, Cambridge, 1994.

\bibitem{Lint-Wilson}
J.~H. van Lint and R.~M. Wilson.
\newblock {\em A course in combinatorics}.
\newblock Cambridge University Press, Cambridge, 1992.

\bibitem{WLS}
Robert~A. Wilson.
\newblock {\em The finite simple groups}, volume 251 of {\em Graduate Texts in
  Mathematics}.
\newblock Springer-Verlag London, Ltd., London, 2009.

\bibitem{max}
Robert~A. Wilson.
\newblock Maximal subgroups of sporadic groups.
\newblock In {\em Finite simple groups: thirty years of the atlas and beyond},
  volume 694 of {\em Contemp. Math.}, pages 57--72. Amer. Math. Soc.,
  Providence, RI, 2017.

\end{thebibliography}

\newpage

\newpage
\begin{center}
    \textbf{\Large{Appendix}}\label{AP}
\end{center}

\section{Proof of Lemma~\ref{lem-simple}}
In this section we prove Lemma \ref{lem-simple} in detail. We present the detailed calculations in the ordering mentioned in the Classification Theorem of Finite Simple Group, i.e., Theorem \ref{classification-thm}. As we mentioned in Section \ref{Preliminary}, we just need to use some known results on the \emph{order} of certain subgroups of simple groups. The detailed description of these groups may be skipped for the purpose of the proof. The results that are used in the proof are on the orders of the finite simple groups, on the orders of maximal subgroups of simple groups and the normalizers of certain types of Sylow subgroups of simple groups. The information about the order of these simple groups can be obtained from \cite{ASH} (see p. 252).

In case (ii) of Theorem \ref{classification-thm}, $H$ is an alternating group. The stabilizer of any set  is a subgroup of an alternating group under natural action. The subgroups $H_1$ and $H_2$ are suitably picked stabilizer subgroups of the given  alternating group $H$.  

For the cases (iii) and (iv) of Theorem \ref{classification-thm},  we use Method 1 and Method 2 to get the desired subgroups as required in Lemma \ref{lem-simple}. In this cases the finite simple group $H$  is of Lie-type and is defined over  a  finite field $\mathbb{F}_q$ where $q$ is a power of some prime $p$. In Method 1, we take $H_2$ to be certain Sylow $p$-subgroup of $H$. The existence of such $H_2$ follows from the well-known Sylow theorem. For the existence of $H_1$, we take the normalizer of $H_2$ or the Borel subgroup. The information about the order of normalizer has been obtained from (see \cite{CA}, p. 76, \cite{WLS}, p. 46).

 
 For the groups in which we use Method 2, we consider a maximal subgroup of $H$ as $H_1$ and $H_2$ to be some Sylow $p$-subgroup of $H_1$. The index of a maximal subgroup (and hence its order) can be obtained from \cite{KL}, p. 175 and \cite{WLS}, p. 156.
 
 For the simple groups in case (v), we use Method 2 and the information about order of maximal subgroup ($H_1$) can be obtained from \cite{max}. Also, for the choice of $H_2$, we choose certain Sylow subgroup of $H_1$.
 
 \vspace{0.3cm}
\noindent \textbf{Remarks}
 
The inequalities in the following two remarks are used in the calculation multiple times. For the sake of completeness we provide proofs of the inequalities.

\begin{remark}\label{Remark-1}
For all integer $q>2,$ we have  $\frac{q}{(q-1)^{2}}< 1.$
\end{remark}
\begin{proof}
\begin{align*}
    1-\frac{q}{(q-1)^{2}}=&\frac{(q-1)^2-q}{(q-1)^{2}}\\=&\frac{q^2-3q+1}{(q-1)^{2}}\\=&\frac{(q-\frac{3+\sqrt{5}}{2})(q-\frac{3-\sqrt{5}}{2})}{(q-1)^{2}} \\>& 0.
\end{align*}
Hence, $\frac{q}{(q-1)^{2}}< 1$, for $q >2$ 
\end{proof}

\begin{remark}\label{Remark-2}
$$\prod_{i=1}^{i=m}(q^{i+1}-(-1)^{i+1})<q^{\sum_{i=1}^{i=m}(i+1)}$$.
\end{remark}
\begin{proof}
 We can observe that the sign of 1 changes alternatively. When $i$ is even then sign of 1 is negative and when $i$ is odd then sign of 1 is positive.
 
 Now, $$(q^{2j-1}-(-1)^{2j-1})(q^{2j}-(-1)^{2j})=(q^{4j-1}-1-q^{2j}+q^{2j-1}) < q^{4j-1} \,( j \geq 1).$$
 
 If $m$ is even, then it is easy to see that we can pair two consecutive odd and even term, the product of these terms is less than the sum of powers of $q$. If $m$ is odd, then the term which is not paired is $q^m-1$. Notice that, $(q^m-1)< q^m$. Thus, in this case also the product of all the terms is less than the sum of powers of $q$.
\end{proof}
\begin{remark}
 The greatest common divisor (gcd) of two natural  numbers $m$ and $n$, both not zero, is denoted by $(m,n)$.
\end{remark}
 
\subsection{Alternating group}

The \emph{Alternating group} $A_m$ is a group of all even permutations of a finite set. It is well known that $A_m$ is simple group, when $m \geq 5.$ Notice that, with respect to the natural action, the \emph{stabilizer} of a point $\alpha$ or a set $S \subset [m] $ is a subgroup of $A_m.$ In particular, the stabilizer of the set $\{ i+1,...,m \}$ is a subgroup of $ A_{m}$ and is isomorphic to $A_i.$ 
      
    Thus, there exists a subgroup $K_j$ of $A_m$ such that $K_j \cong A_j,$ $\forall$ j and $\lvert  K_j \rvert = \frac{j!}{2}.$ Let  $k \in \mathbb{Z}$ such that $\frac{k!}{2} \leq \sqrt{\frac{m!}{2}} < \frac{(k+1)!}{2},$ and  $$H_2 \cong A_k \text{ amd } H_1 \cong A_{k+1}.$$
Thus, $A_k$ is subgroup of $A_{k+1}$ as $A_k$ is stabilizer which fixes the point $ k+1$ and  $H_2 \leq H_1.$
Notice that,
    $$\lvert H_2 \rvert ^2 = \Big(\frac{k!}{2}\Big)^2 \leq \frac{m!}{2}.$$
Now, consider
    \begin{align*}
        {\frac{m!}{2}} &< \Big(\frac{(k+1)!}{2}\Big)^2\\
        \frac{(k+2)  (k+3)  \cdots m}{2} &< \frac{1\cdot 2  \cdots k+1}{4}\\
        (k+2)  (k+3)  \cdots m &< 3 \cdot 4  \cdots k+1.
    \end{align*}
 Notice that, each term in right hand side is less than that of each term in left hand side. Thus, the number of terms in left hand side must be strictly less than the number of terms in the right hand side which implies that 
 \begin{align*}
     m-(k+1) &< (k+1)-2\\
     m-2k &< 0\\
    \frac{m}{2} &< k.
 \end{align*} 
 Thus, $k >  \frac{m}{2}$ and it is easy to find such $k.$ \\
 Consider,
$$\Big(\frac{\lvert H_1 \rvert}{\lvert H_2 \rvert}\Big)^2 = \Big(\frac{\frac{(k+1)!}{2}}{\frac{k!}{2}}\Big)^2 = (k+1)^2.$$
Now, when $m=5,k=3$ and $m=6,k=4$ , we can see that the following inequality holds:
$$(k+1)^2< \frac{m!}{2}.$$
Consider, $m\geq 7$ and $k\geq 4,$ then,
\begin{align*}
    \frac{k!}{2} & \leq \sqrt{\frac{m!}{2}}\\
    \Big(\frac{k!}{2}\Big)^2 &\leq \frac{m!}{2}\\
    \frac{k^2 (k-1)^2 (k-2)^2 }{4} &\leq \Big(\frac{k!}{2}\Big)^2\leq \frac{m!}{2}.
\end{align*}
Let,
\allowdisplaybreaks
\begin{align*}
   \frac{4 (k+1)^2}{ k^2 (k-1)^2 (k-2)^2 } &= \frac{4 (1+\frac{1}{k})^2}{(k-1)^2 (k-2)^2 }\\
                                           &< \frac{4} {4 \cdot 9} \frac{25}{16}\\
                                           &<1\\
    (k+1)^2 < \frac{ k^2 (k-1)^2 (k-2)^2 }{ 4 } &\leq \Big(\frac{k!}{2}\Big)^2\leq \frac{m!}{2}.
\end{align*}
Thus, it implies that,
$$\Big(\frac{\lvert H_1 \rvert}{\lvert H_2 \rvert}\Big)^2 = (k+1)^2 \leq \frac{m!}{2}.$$
Clearly,
$$\Big(\frac{\lvert H \rvert}{\lvert H_1 \rvert}\Big)^2 = \Big(\frac{\frac{(m)!}{2}}{\frac{(k+1)!}{2}}\Big)^2 < \Big(\frac{\frac{m!}{2}}{\sqrt{\frac{m!}{2}}}\Big)^2 = \frac{m!}{2}.$$

\subsection{The Classical Groups}
In this section, we consider $H$ to be a classical simple group described in case (iii) of Theorem \ref{classification-thm}. Let $q$ be a power of some prime $p$. As described earlier, we use Method 1 and Method 2 to show the existence of subgroups $H_2$ and $H_1$ of the simple group $H$.

\subsubsection{A Classical Groups of Linear Type:}
\begin{enumerate}
    \item[ 1.1] $H=A_{m}(q)$; $m \geq 1$, $q>2$ (Method 1)
        
        \vspace{0.1cm}
        The finite simple group $A_{m}(q)$ is isomorphic to the \emph{projective special linear group} $ \rm{PSL}_{{\rm m}+1}(q)$, where ${\rm PSL}_{m+1}(q)$ is the group obtained by taking special linear group ${\rm SL}_{m+1}(q)$ and quotienting out by its center, i.e. $A_{m}(q) \cong \frac{{\rm SL}_{m+1}(q)}{Z({\rm SL}_{m+1}(q))}$ (see \cite{WLS}, p. 44). It is known that (see \cite{ASH}, p.  252) its order is,
        \begin{center}
        $\lvert H \rvert= \frac{q^{\frac{m(m+1)}{2} }  \prod_{i=1}^{m} (q^{i+1}-1)  }{(q-1,m+1)}.$  
        \end{center}
        Let $H_2$ be the Sylow $p$-subgroup of $A_{m}(q)$, then $\lvert H_2 \rvert= q^{\frac{m(m+1)}{2}}.$ \,Notice that $ \lvert H_2 \rvert^2  \leq \lvert H \rvert.$ Consider the Borel subgroup $H_1$ of $H$, its order is (see \cite{WLS}, p. 46),
         $$\vert H_1 \vert=\frac{q^{\frac{m(m+1)}{2}}}{(q-1,m+1)} (q-1)^m.$$
         Consider,
\allowdisplaybreaks         
        \begin{align*}
            \frac{\lvert  H_1 \rvert}{\lvert H_2 \rvert^2} 
            &= \frac{ \frac{q^{\frac{m(m+1)}{2}}}{(q-1,m+1)} (q-1)^m }{q^{m(m+1)} }\\
            & = \frac{1}{(q-1,m+1)} \frac{(q-1)^m }{q^{\frac{m(m+1)}{2}} }\\
            & < \frac{1}{(q-1,m+1)} \frac{q^m }{q^{\frac{m(m+1)}{2}}}\\
            & < \frac{q^m }{q^{1+2+3+ \ldots +m}}\\
            & <1.
        \end{align*}
Thus,$$\Big(\frac{\vert H_1 \vert}{\lvert H_2 \rvert}\Big)^2  < \vert H_1 \vert  < \vert H\vert.$$
Also, 
        \begin{align*}
         \frac{\lvert H \rvert}{\vert H_1 \vert^2} 
            &=\frac{\frac{q^{\frac{m(m+1)}{2} }  \prod_{i=1}^{m} (q^{i+1}-1)  }{(q-1,m+1)}}{( \frac{q^{\frac{m(m+1)}{2}}}{(q-1,m+1)} (q-1)^m)^2 }\\
            &= \frac{ (q-1,m+1) \prod_{i=1}^{m} (q^{i+1}-1)  }{ q^{\frac{m(m+1)}{2}} (q-1)^{2m}}\\
            & < \frac{ q^{1+2+ \ldots + m} q^{m+1}}{q^{\frac{m(m+1)}{2}}(q-1)^{2m} }\\
            & = \frac{  q^{m+1}}{(q-1)^{m+1} (q-1)^{m-1} }\\
            &= \frac{1}{(1-\frac{1}{q})^{m+1} (q-1)^{m-1} }\\
            &=  \frac{1}{  (1-\frac{1}{q})^{2} (1-\frac{1}{q})^{m-1} (q-1)^{m-1} }\\
            &=  \frac{1}{(1-\frac{1}{q})^{2} (q-2+ \frac{1}{q})^{m-1} }\\
            & < 3.
      \end{align*}
Therefore, $$ \Big (\frac{\lvert H \rvert}{\lvert  H_1 \rvert} \Big)^2 <  3\, \lvert H \rvert \implies \frac{\lvert H \rvert}{\lvert  H_1 \rvert}  <  2\, \sqrt{ \lvert H \rvert}. $$
\item[1.2] $A_m(q)$; $m \geq 1$, $q=2$ (Method 2)

\vspace{0.1cm}
The finite simple group $A_{m}(2)$ is of order $ 2^{\frac{m(m+1)}{2} }  \prod_{i=1}^{m} (2^{i+1}-1)/(q-1,m+1)$ and is isomorphic to \emph{projective special linear group} ${\rm PSL}_{m+1}(2)$ or ${\rm L}_{m+1}(2).$ It has a maximal subgroup of index $(2^{m+1}-1)$ (see \cite{KL}, p.  175). Let $H_1$ be one such maximal subgroup of $A_m(2)$. Then,
\begin{align*}
        \lvert H_{1} \rvert &= \frac{ \vert A_m(2) \vert }{(2^{m+1}-1)}\\
                      &= \frac{ 2^{\frac{m(m+1)}{2} }  \prod_{i=1}^{m} (2^{i+1}-1) }{ (2^{m+1}-1) }.
\end{align*}     
Let $H_2$ be the Sylow $2$-subgroup of $H_1.$  Then, $H_2$ has order $2^{\frac{m(m+1)}{2}}$ and $\vert H_{2} \vert^2 < \vert A_m(2) \vert. $ This implies that
$$ \frac{ \lvert H_1 \rvert }{\lvert H_2 \rvert }=  \prod_{i=1}^{m-1} (2^{i+1}-1).$$

Consider,
\begin{align*}
    \frac{\Big(\frac{ \lvert H_1 \rvert }{\lvert H_2 \rvert } \Big)^2} { \vert A_m(2) \vert } 
    &= \frac{\big(\prod_{i=1}^{m-1} (2^{i+1}-1) \big)^2} {2^{\frac{m(m+1)}{2} }  \prod_{i=1}^{m} (2^{i+1}-1) }\\
    &= \frac{\prod_{i=1}^{m-1} (2^{i+1}-1)} {  2^{\frac{m(m+1)}{2} (2^{m+1}-1)}}\\
    &< \frac{2^{\frac{m(m+1)}{2}-1}} {  2^{\frac{m(m+1)}{2} (2^{m+1}-1)}}\\
    &< 1. 
\end{align*}
Thus, we get
$$ \Big(\frac{ \lvert H_1 \rvert }{\lvert H_2 \rvert } \Big)^2 <  \lvert A_m(2) \rvert.$$
Now,
\begin{align*}
     \Big(\frac{ \vert A_m(2) \vert  }{\lvert H_1 \rvert } \Big)^2 &= (2^{m+1}-1)^2\\
                                                    &< 2^{\frac{m(m+1)}{2} }  \prod_{i=1}^{m} (2^{i+1}-1)\\
                                                    &= \lvert A_m(q) \rvert.
\end{align*}
\end{enumerate}

\subsubsection{A Classical Groups of Unitary Type:}
\begin{enumerate}
    \item[1.1]  $H= ~^2A_m(q^2)$; $m \geq 2$, $q>2$ (Method 1)
            
\vspace{0.1cm}
The finite simple group $^2A_m(q^2)$ is isomorphic to the  \emph{projective special unitary group} ${\rm PSU}_{m+1}(q)$. The group ${\rm PSU}_{m+1}(q)$ is the group obtain by taking special unitary group ${\rm SU}_{m+1}(q)$ and quotienting it by its center, i.e. $^2A_m(q^2) \cong \frac{{\rm SU}_{m+1}(q)}{Z(\rm{SU}_{m+1}(q))}$ (see \cite{WLS}, p. 66). It is known that (see \cite{ASH}, p.  252) the order of $^2A_m(q^2)$ is,
\allowdisplaybreaks
        $$\lvert H \rvert= \frac{q^{\frac{m(m+1)}{2} } }{(q+1,m+1)}  \prod_{i=1}^{m} (q^{i+1}-(-1)^{i+1}).$$
        
        Let $H_2$ be the Sylow $p$-subgroup of $^2A_m(q^2)$,  then $\lvert H_2 \rvert= q^{\frac{m(m+1)}{2}}$ and $ \lvert H_2 \rvert^2  \leq \lvert H \rvert.$ Let $H_1$ be the Borel subgroup of $H$ of order (see \cite{WLS}, \cite{CA}),

        \begin{center}
         $\vert H_1 \vert=\frac{q^{\frac{m(m+1)}{2} }}{(q+1,m+1)}(q-1)^{\lfloor m/2 \rfloor} (q+1)^{\lceil \frac{m-1}{2} \rceil}$.
        \end{center}
        
        Consider,
        \begin{align*}
            \Big(\frac{\lvert H_1 \rvert}{\lvert H_2 \rvert} \Big)^2 
            &= \Big(\frac{(q-1)^{\lfloor m/2 \rfloor} (q+1)^{\lceil \frac{m-1}{2} \rceil} }{(q+1,m+1)}\Big)^2\\
            &\leq \frac{(q-1)^{m} (q+1)^{m}}{(q+1,m+1)^2}\\
            &= \frac{(q^{2}-1)^{m}}{(q+1,m+1)^2} \\
            & \leq \lvert H_1 \rvert.
        \end{align*}
Thus,
$$ \Big (\frac{\lvert H_1 \rvert}{\lvert H_2 \rvert} \Big)^2 \leq \lvert H_1 \rvert < \lvert H \rvert. $$
Also, 
\begin{align*}
    \frac{\lvert H \rvert}{\lvert H_1 \rvert^2} 
        &=\frac{  \frac{q^{\frac{m(m+1)}{2} } }{(q+1,m+1)}  \prod_{i=1}^{m} (q^{i+1}-(-1)^{i+1}) }{\Big(\frac{ (q-1)^{\lfloor m/2 \rfloor} (q+1)^{\lceil \frac{m-1}{2} \rceil} q^{\frac{m(m+1)}{2} }}{(q+1,m+1)  } \Big)^2 }\\
        &= \frac{(q+1,m+1) \prod_{i=1}^{m} (q^{i+1}-(-1)^{i+1}) }{  (q-1)^{m} (q+1)^{m-1} q^{\frac{m(m+1)}{2} } }\\
        &< \frac{ (q+1,m+1) q^{\frac{(m+1)(m+2)}{2}-1}}{ (q-1)^{m} (q+1)^{m-1} q^{{\frac{m^{2}+m}{2}}} }      \quad \quad \quad ( \text{by Remark}\, \ref{Remark-2}) \\
        &= \frac{(q+1,m+1) q^{\frac{(m+1)(m+2)}{2}-1}}{  (q-1) (q^{2}-1)^{m-1} q^{\frac{m^{2}+m}{2}}} \\
        &= \frac{(q+1,m+1)q^{m}}{(q-1)(q^{2}-1)^{m-1}}\\
        &< 2\, \frac{q}{\big(q-\frac{1}{q}\big)^{m-1}}\\
        &< 4.
        \end{align*}
Thus,
$$\Big (\frac{\lvert H \rvert}{\lvert H_1 \rvert} \Big )^2 \leq 4 \lvert H \rvert \implies \frac{\lvert H \rvert}{\lvert H_1 \rvert} \leq 2 \sqrt{\lvert H \rvert}$$

\item[1.2]  $H=~^2A_m(q^2)$; $m \geq 2$, $q=2$ (Method 2)

\vspace{0.1cm}
The finite simple group $^2A_m(2^2)$ is of order $ 2^{\frac{m(m+1)}{2} } \prod_{i=1}^{m} (2^{i+1}-(-1)^{i+1})/{(3,m+1)}$ and is isomorphic to \emph{projective special unitary group} ${\rm PSU}_{m+1}(2)$ or ${\rm U}_{m+1}(q)$. The group ${\rm U}_{m+1}(q)$ has a maximal subgroup of index ${\frac{ (2^{m+1}- (-1)^{m+1})(2^{m}-(-1)^{m})}{3} }$ when $6 \nmid (m-1)$ and of index $\frac{2^{m}(2^{m+1}-1)}{3},$ when $6 \mid (m-1)$  (see \cite{KL}, p. 175) .

 \begin{enumerate}
  \item [(Case 1)] $6 \nmid (m-1)$

\vspace{0.1cm}
Let $H_1$ be corresponding maximal subgroup of $^2A_m(2^2)$ whose index is $${\frac{ (2^{m+1}- (-1)^{m+1})(2^{m}-(-1)^{m})}{3} }$$ in $^2A_m(2^2).$ Then, the order of $H_1$ is,
\begin{align*}
        \lvert H_1 \rvert &= \frac{ \vert  ~^2A_m(2^2) \vert }{{\frac{ (2^{m+1}- (-1)^{m+1})(2^{m}-(-1)^{m})}{3} }}\\
                      &= \frac{3}{(3,m+1)} \frac{ {2^{\frac{m(m+1)}{2}}}  \prod_{i=1}^{m} (2^{i+1}-(-1)^{i+1})} {(2^{m+1}- (-1)^{m+1})(2^{m}-(-1)^{m})}.
\end{align*}     
Let $H_2$ be the Sylow 2-subgroup of $H_1.$  Then, $\lvert H_2 \rvert = 2^{\frac{m(m+1)}{2}}$ and $\lvert  H_{2} \rvert ^2 < \lvert ~^2A_m(2^2) \rvert. $ Also,
$$ \frac{ \lvert H_1 \rvert }{\lvert H_2 \rvert }=  \frac{3}{(3,m+1)} \frac{  \prod_{i=1}^{m} (2^{i+1}-(-1)^{i+1})} {(2^{m+1}- (-1)^{m+1})(2^{m}-(-1)^{m})}.$$
Consider,
\begin{align*}
    \frac{\Big(\frac{ \lvert H_1 \rvert }{\lvert H_2 \rvert } \Big)^2} { \vert ~^2A_m(2^2) \vert } 
     &= \frac{ \Big(\frac{3}{(3,m+1)} \frac{  \prod_{i=1}^{m} (2^{i+1}-(-1)^{i+1})} {(2^{m+1}- (-1)^{m+1})(2^{m}-(-1)^{m})}\Big)^2}  {\frac{2^{\frac{m(m+1)}{2} } }{(3,m+1)}  \prod_{i=1}^{m} (2^{i+1}-(-1)^{i+1})}\\
     &= \frac{9}{(3,m+1)} \frac{\prod_{i=1}^{m} (2^{i+1}-(-1)^{i+1})}  {2^{\frac{m(m+1)}{2} } ((2^{m+1}- (-1)^{m+1})(2^{m}-(-1)^{m}))^2 }\\
    &< \frac{9}{(3,m+1)} \frac{2^{\frac{(m+1)(m+2)}{2}-1 }}  {2^{\frac{m(m+1)}{2} } ((2^{m+1}- (-1)^{m+1})(2^{m}-(-1)^{m}))^2 } \quad \quad \quad ( \text{by Remark}\, \ref{Remark-2})\\
    &= \frac{9}{(3,m+1)}  \frac{2^{m}}{ ((2^{m+1}- (-1)^{m+1})(2^{m}-(-1)^{m}))^2}\\
    &< \frac{9}{(3,m+1)}  \frac{1}{ \Big(\big(2^{\lfloor \frac{m+1}{2}\rfloor }- \frac{(-1)^{m+1}}{2^{\lfloor \frac{m}{2} +1 \rfloor }}\big)(2^{m}-(-1)^{m})\Big)^2}\\
    &< 1.
 \end{align*}
 Therefore,
 $$     \Big(\frac{ \lvert H_1 \rvert }{\lvert H_2 \rvert } \Big)^2 <  \rvert~^2A_m(2^2)\lvert $$
 Now,
 \begin{align*}
     \Big(\frac{ \vert ~^2A_m(2^2) \vert  }{\lvert H_1 \rvert }\Big)^2
                                                    &= \Big({\frac{ (2^{m+1}- (-1)^{m+1})(2^{m}-(-1)^{m})}{3} }\Big)^2\\
                                                    &< \frac{2^{\frac{m(m+1)}{2} } }{(3,m+1)}  \prod_{i=1}^{m} (2^{i+1}-(-1)^{i+1}) \\      
                                                    &=  \lvert ~^2A_m(2^2) \rvert.
\end{align*}
\item[(Case 2)] \textbf{ $6 \vert (m-1)$} (i.e. $m \geq 7$)
   
   \vspace{0.1cm}
In this case, as we know that the group $^2A_m(q^2)$ has a maximal subgroup of index $ \frac{2^{m}(2^{m+1}-1)}{3}.$ Let $H_1$ be one such maximal subgroup. Then,
\begin{align*}
        \lvert H_1 \rvert &= \frac{ \vert  ~^2A_m(2^2) \vert }{\frac{2^{m}(2^{m+1}-1)}{3}}\\
                      &= \frac{\frac{2^{\frac{m(m+1)}{2} } }{(3,m+1)}  \prod_{i=1}^{m} (2^{i+1}-(-1)^{i+1})}{\frac{2^{m}(2^{m+1}-1)}{3}}\\
                      &= \frac{3}{(3,m+1)} {2^{\frac{m(m-1)}{2}}}  \prod_{i=1}^{m-1} (2^{i+1}-(-1)^{i+1}).
\end{align*}     
Let $H_2$ be the Sylow $2$-subgroup of $H_1,$ then $ H_2 $ has order $2^{\frac{m(m-1)}{2}}$ and $\lvert  H_{2} \rvert^2 < \lvert ~^2A_m(2^2) \rvert. $ Also,
$$ \frac{ \lvert H_1 \rvert }{\lvert H_2 \rvert }=   \frac{3}{(3,m+1)} \prod_{i=1}^{m-1} (2^{i+1}-(-1)^{i+1}).$$
Consider,
\begin{align*}
    \frac{\Big(\frac{ \lvert H_1 \rvert }{\lvert H_2 \rvert }\Big)^2} { \vert ~^2A_m(2^2) \vert } 
     &= \frac{\Big(  \frac{3}{(3,m+1)} \prod_{i=1}^{m-1} (2^{i+1}-(-1)^{i+1})\Big)^2}  {\frac{2^{\frac{m(m+1)}{2} } }{(3,m+1)}  \prod_{i=1}^{m} (2^{i+1}-(-1)^{i+1})}\\
     &= \frac{9}{(3,m+1)} \frac{\prod_{i=1}^{m-1} (2^{i+1}-(-1)^{i+1})}  {2^{\frac{m(m+1)}{2} } (2^{m+1}- (-1)^{m+1}) }\\
    &< \frac{9}{(3,m+1)} \frac{2^{\frac{m^2+m-2}{2}}}  {2^{\frac{m^{2}+m}{2} } (2^{m+1}- (-1)^{m+1})} \quad \quad \quad ( \text{by Remark}\, \ref{Remark-2})\\
    &= \frac{9}{(3,m+1)}  \frac{1}{2\,(2^{m+1}- (-1)^{m+1}) }\\
    &< 1. \quad \quad \quad \quad \quad \quad \quad \quad \quad \quad \quad \quad \quad \quad \quad \quad \quad \quad (\text{since} \, {m \geq 7})
\end{align*}
This implies that,
$$   \Big(\frac{ \lvert H_1 \rvert }{\lvert H_2 \rvert }\Big)^2 < \lvert~^2A_m(2^2) \rvert.$$
Now,
\begin{align*}
     \Big(\frac{ \lvert ~^2A_m(2^2) \rvert  }{\lvert H_1 \rvert }\Big)^2
                                                    &= \Big(\frac{2^{m}(2^{m+1}-1)}{3}\Big)^2\\
                                                    &= \frac{ 2^{2m}(2^{m+1}-1)^2}{9}\\         &< \frac{2^{\frac{m(m+1)}{2} } }{(3,m+1)}  \prod_{i=1}^{m} (2^{i+1}-(-1)^{i+1}) \\ 
                                                    &=  \lvert ~^2A_m(2^2) \rvert.
\end{align*}
\end{enumerate}
\end{enumerate}

\subsubsection{Classical Groups of Symplectic Linear Type: }
\begin{enumerate}
    \item[1.1]  $H= C_{m}(q)$; $m \geq 2$, $q>2 $ (Method 1)
     
     \vspace{0.1cm}   
The order of the finite simple group $C_{m}(q)$ could be found in (see \cite{ASH}, p. 252) and it is,
        \begin{center}
            $\lvert H \rvert= \frac{q^{m^2 }  \prod_{i=1}^{m} (q^{2i}-1)  }{(2,q-1)}.$
        \end{center}
        Let $H_2$ be the Sylow $p$-subgroup of $H$, then $\lvert H_2 \rvert= q^{m^2}$ and $ \lvert H_2 \rvert^2  \leq \lvert H \rvert.$ Let $H_1$ be the normalizer of $H_2$ in $H$ then the order of $H_1$  (see \cite{AI}, p. 3) is,  
        $$\lvert H_1 \rvert=\frac{q^{m^2}}{(2,q-1)} (q-1)^m. $$
        
          Consider,
        \begin{align*}
            \frac{\lvert H_1 \rvert}{\lvert H_2 \rvert^2} 
            &= \frac{\frac{q^{m^2}}{(2,q-1)} (q-1)^m }{q^{2m^2} }\\
            & = \frac{1}{(2,q-1)} \frac{(q-1)^m }{q^{m^2} }\\
            & < \frac{1}{(2,q-1)} \frac{q^m }{q^{m^2} }\\
            & < 1.
        \end{align*}
Thus,
$$  \Big (\frac{\lvert H_1 \rvert}{\lvert H_2 \rvert} \Big )^2 < \vert H_1 \vert < \lvert H \rvert. $$
        Also, 
      \begin{align*}
      \frac{\lvert H \rvert}{\lvert H_1 \rvert^{2}} 
            &=\frac{\frac{q^{m^2 }  \prod_{i=1}^{m} (q^{2i}-1)  }{(2,q-1)}}{( \frac{q^{m^2}}{(2,q-1)} (q-1)^m)^2 }\\
            &= \frac{ (2,q-1) \prod_{i=1}^{m} (q^{2i}-1)  }{ q^{m^2} (q-1)^{2m}}\\
            & < \frac{(2,q-1)}{q^{m^2}} \frac{\prod_{i=1}^{m} q^{2i}}{(q-1)^{2m} }  \\
            &= (2,q-1) \frac{q^{m+m^2}}{q^{m^2} (q-1)^{2m}} \\
            &= (2,q-1)  (\frac{q}{(q-1)^2})^{m}\\
            &\leq 2 \quad \quad \quad ( \text{by Remark}\, \ref{Remark-1})
 \end{align*}
 Therefore,
 $$  \Big(\frac{\lvert H \rvert}{\vert H_1 \vert} \Big)^2 \leq 2 \, \lvert H \rvert \implies \frac{\lvert H \rvert}{\vert H_1 \vert}  < 2 \, \sqrt{\lvert H \rvert}.$$
 \item[1.2]  $H= C_{m}(q)$; $m \geq 2$, $q=2$ (Method 2)
 
\vspace{0.1cm}
The simple group $C_m(q)$ (or ${\rm PSp}_{2m}(q)$) has order $2^{m^2 }  \prod_{i=1}^{m} (2^{2i}-1)$. It is known that the group  ${ \rm PSp}_{2m}(q)$ has a maximal subgroup of index $ 2^{m-1} (2^{m}-1)$ (see \cite{KL}, p.  175). Let $H_1$ be one such subgroup, then the order of $H_1$ is,
\begin{align*}
    \lvert H_1 \rvert &= \frac{ \vert C_m(2) \vert }{2^{m-1} (2^{m}-1)}\\
                      &= \frac{ 2^{m^2 }  \prod_{i=1}^{m} (2^{2i}-1)}{ 2^{m-1} (2^{m}-1) }\\
                      &= \frac{ 2^{m^{2} -m+1} (2^{2m} -1)  \prod_{i=1}^{m-1} (2^{2i}-1)}{ (2^{m}-1) }\\
                      &=  2^{m^{2} -m+1} (2^{m} +1)  \prod_{i=1}^{m-1} (2^{2i}-1).
\end{align*}     
Let $H_2$ be the Sylow $2$-subgroup of $H_1.$  Then, the order of $ H_2 $ is $2^{m^{2} -m+1}.$ Notice that $\lvert H_{2} \rvert^2 < \lvert C_m(2) \rvert.$ Thus,
$$ \frac{ \lvert H_1 \rvert }{\lvert H_2 \rvert }= (2^{m} +1)  \prod_{i=1}^{m-1} (2^{2i}-1).$$
Consider,
\begin{align*}
    \frac{\Big(\frac{ \lvert H_1 \rvert }{\lvert H_2 \rvert }\Big)^2} { \lvert C_m(2) \rvert } &= \frac{((2^{m} +1)  \prod_{i=1}^{m-1} (2^{2i}-1))^2} {2^{m^2 }  \prod_{i=1}^{m} (2^{2i}-1) }\\
    &=\frac{(2^{m} +1)^2  (\prod_{i=1}^{m-1} (2^{2i}-1))^2} {2^{m^2 } (2^{2m}-1)  \prod_{i=1}^{m-1} (2^{2i}-1) }\\
    &=\frac{(2^{m} +1)  \prod_{i=1}^{m-1} (2^{2i}-1)} {2^{m^2 } (2^{m}-1)  }\\
    &< \frac{(2^{m} +1)  \prod_{i=1}^{m-1} 2^{2i}} {2^{m^2 } (2^{m}-1)  }\\
    &= \frac{(2^{m} +1)  2^{m(m-1)}} {2^{m^2 } (2^{m}-1)  }\\
    &= \frac{(1+2^{-m})} { (2^{m}-1)  }\\
    &< 1.
\end{align*}
Therefore,
$$ \Big(\frac{ \lvert H_1 \rvert }{\lvert H_2 \rvert }\Big)^2 <   \lvert C_m(2) \rvert.$$

 Now,
\begin{align*}
     \Big(\frac{ \vert C_m(2) \vert  }{\lvert H_1 \rvert }\Big)^2 &= (2^{m-1} (2^{m}-1))^2\\
                                                    &= 2^{2m-2} (2^{m}-1)^2\\
                                                    &< 2^{m^{2}} (2^{m}-1) (2^{m}+1)\\
                                                    &< 2^{m^2} \prod_{i=1}^{m} (2^{2i}-1) \\
                                                    &=  \lvert C_m(2) \rvert.
\end{align*}
       
\end{enumerate}
\subsubsection{Classical Groups of Orthogonal Type: }

\begin{enumerate}
       \item[1.1] $H=B_{m}(q);$ $m \geq 3$ and $q$ odd (Method 1)
        
        \vspace{0.1cm}
         The finite simple group $B_{m}(q)$ has order (see \cite{ASH}, p. 252),
        \begin{center}
            $\lvert H \rvert= \frac{q^{m^2 }  \prod_{i=1}^{m} (q^{2i}-1)  }{(2,q-1)}.$
        \end{center}
        
        Let $H_2$ be the Sylow $p$-subgroup of $H$, then $\lvert H_2 \rvert= q^{m^2}$ and $ \lvert H_2 \rvert^2  \leq \lvert H \rvert.$ Let $H_1$ be the normalizer of $H_1$ in $H$ then its order is (see \cite{AI}, p. 3), 
        $$\lvert H_1\rvert=\frac{q^{m^2}}{(2,q-1)} (q-1)^m. $$
        
        Since, $ \lvert B_{m}(q) \rvert = \lvert C_{m}(q) \rvert$ and the order of the normalizer of $H_1$ in both the groups are also equal and thus, all the calculations will also work for $B_{m}(q).$

      \item[2.1] $H= D_{m}(q)$; $m \geq 4$, $q>2$ (Method 1)
      
        \vspace{0.1cm}
        The order of the finite simple group $D_{m}(q)$ could be found in (see \cite{ASH}, p. 252) and it is,
        \begin{center}
        $\vert H \vert = \frac{q^{{m(m-1)}} (q^{m}-1) \prod_{i=1}^{m-1} (q^{2i}-1)  }{(4, q^{m}-1)}.$
        \end{center}
        Let $H_2$ be the Sylow $p$-subgroup of $D_{m}(q)$,  then $\lvert H_2 \rvert = q^{{m(m-1)}}$ and $ \lvert H_2 \rvert^2  \leq \lvert H \rvert.$ Let $H_1$ be the Borel subgroup of $H$. It has order (see \cite{WLS}, \cite{CA}), $$\lvert H_1 \rvert =\frac{q^{{m(m-1)}} }{ (4, q^{m}-1)}  (q-1)^m.$$
    Consider,
    \begin{align*}
            \frac{\lvert H_1 \rvert}{\lvert H_2 \rvert^2} 
            &= \frac{ \frac{q^{{m(m-1)}} }{ (4, q^{m}-1)}  (q-1)^m }{q^{2m(m-1)} }\\
            & = \frac{1}{(4, q^{m}-1)} \frac{(q-1)^m }{q^{m(m-1)} }\\
            & < \frac{1}{(4, q^{m}-1)} \frac{q^m }{q^{m(m-1)} }\\
            & < 1.
        \end{align*}
Thus, $$ \Big( \frac{\lvert H_1 \rvert}{\lvert H_2 \rvert} \Big )^2 < \vert H_1  \vert < \lvert H \rvert $$
      Also, 
\allowdisplaybreaks
    \begin{align*}
      \frac{\lvert H \rvert}{\lvert H_1 \rvert^{2}} 
            &=\frac{\frac{q^{{m(m-1)}} (q^{m}-1) \prod_{i=1}^{m-1} (q^{2i}-1)  }{(4, q^{m}-1)}}{(  \frac{q^{{m(m-1)}} }{ (4, q^{m}-1)}  (q-1)^m )^2 }\\
            &= \frac{ (4, q^{m}-1) (q^{m}-1) \prod_{i=1}^{m} (q^{2i}-1)  }{ q^{{m(m-1)}}  (q-1)^{2m}}\\
            &< \frac{(4, q^{m}-1) (q^{m}-1) }{ q^{{m(m-1)}}} \frac{\prod_{i=1}^{m-1} q^{2i}}{(q-1)^{2m} }  \\
            &< \frac{(4, q^{m}-1) q^{m} }{ q^{{m(m-1)}}} \frac{ q^{{m(m-1)}} }{(q-1)^{2m} }  \\
            &=(4, q^{m}-1) \Big(\frac{q}{(q-1)^2}\Big)^{m} \\
            & \leq 4.  \quad \quad \quad ( \text{by Remark}\, \ref{Remark-1})
    \end{align*}
Thus, $$ \Big(\frac{\lvert H \rvert}{\vert H_1 \vert} \Big)^2 \leq  4\, \lvert H \rvert \implies \frac{\lvert H \rvert}{\vert H_1 \vert} \leq  2 \sqrt{ \lvert H \rvert}. $$
     \item[2.2] $H= D_{m}(q)$; $m \geq 4$, $q=2$ (Method 2)
     
     \vspace{0.1cm}
The simple group $D_m(q)$ (or ${\rm P\Omega}_{2m}^{+}(q)$) is of order $2^{{m(m-1)}} (2^{m}-1) \prod_{i=1}^{m-1} (2^{2i}-1)$. The group ${\rm P\Omega}_{2m}^{+}(2)$ has a maximal subgroup of index $2^{m-1} (2^{m}-1)$  (see \cite{KL}, p. 175). Let $H_1$ be a corresponding maximal subgroup of $D_m(2)$ whose index is $2^{m-1} (2^{m}-1).$ Then the order of $H_1$ is,
\begin{align*}
        \lvert H_1 \rvert &= \frac{ \vert D_m(2) \vert }{2^{m-1} (2^{m}-1)}\\
                      &= \frac{2^{{m(m-1)}} (2^{m}-1) \prod_{i=1}^{m-1} (2^{2i}-1)}{ 2^{m-1} (2^{m}-1) }\\
                      &=  2^{m^{2}-2m+1} \prod_{i=1}^{m-1} (2^{2i}-1).
\end{align*}     
Let $H_2$ be the Sylow 2-subgroup of $H_1,$  then $ H_2 $ has order $2^{m^{2}-2m+1}$ and $\lvert  H_{2} \rvert^2 < \lvert D_m(2) \rvert. $ Also,
$$ \frac{ \lvert H_1 \rvert }{\lvert H_2 \rvert }= \prod_{i=1}^{m-1} (2^{2i}-1).$$
Consider,
\begin{align*}
    \frac{\Big(\frac{ \lvert H_1 \rvert }{\lvert H_2 \rvert }\Big)^2} { \vert D_m(2) \vert } &= \frac{(\prod_{i=1}^{m-1} (2^{2i}-1))^2}  {2^{{m(m-1)}} (2^{m}-1) \prod_{i=1}^{m-1} (2^{2i}-1) }\\
     &= \frac{\prod_{i=1}^{m-1} (2^{2i}-1)}  {2^{{m(m-1)}} (2^{m}-1) }\\
     &<  \frac{\prod_{i=1}^{m-1} 2^{2i}}  {2^{{m(m-1)}} (2^{m}-1) }\\
     &=  \frac{2^{m(m-1)} } {2^{{m(m-1)}} (2^{m}-1) }\\
     &= \frac{1 }{(2^{m}-1)}\\
     &< 1
\end{align*}
Thus,
$$ \Big(\frac{ \lvert H_1 \rvert }{\lvert H_2 \rvert }\Big)^2 <  \lvert D_m(2) \rvert. $$
 Now,
 \begin{align*}
     \Big(\frac{ \lvert D_m(2) \rvert  }{\lvert H_1 \rvert }\Big)^2 &= 2^{2(m-2)} (2^{m}-1)^2\\
                                                    &<2^{{m(m-1)}} (2^{m}-1) \prod_{i=1}^{m-1} (2^{2i}-1)  \\
                                                    &=  \lvert D_m(2) \rvert.
\end{align*}
    
     \item[3.1] $H= ~^2D_m(q^2)$; $m \geq 4$, $q>2$ (Method 1)
            
            \vspace{0.1cm}
        The order of the finite simple group $^2D_m(q^2)$ could be found in (see \cite{ASH}, p. 252) and it is,
        \begin{center}
        $\lvert H \rvert=\frac{q^{m(m-1)}(q^{m}+1) }{(4, q^{m}+1)} \prod_{i=1}^{m-1} (q^{2i}-1).$  
        \end{center}
        Let $H_2$ be the Sylow $p$-subgroup of $^2D_m(q^2)$, then $\lvert H_2 \rvert= q^{m(m-1)}$ and $ \lvert H_2 \rvert^2  \leq \lvert H \rvert.$ Let $H_1$ to be the Borel subgroup of $H$. The order of $H_1$ is (see \cite{WLS}, \cite{CA}). Thus,
        
        \begin{center}
         $\vert B\vert=\frac{q^{{m(m-1)}} }{ (4, q^{n}+1)}  (q-1)^m$.
        \end{center}
         Consider, 
        \begin{align*}
            \frac{\lvert H_1 \rvert}{\lvert H_2 \rvert^2} 
            &= \frac{ \frac{q^{{m(m-1)}} }{ (4, q^{n}+1)}  (q-1)^m  }{q^{2m(m-1)} }\\
            & = \frac{1}{(4, q^{n}+1)} \frac{(q-1)^m }{q^{m(m-1)} }\\
            &< \frac{1}{(4, q^{n}+1)} \frac{q^m }{q^{m(m-1)} }\\
            & \leq 1.
        \end{align*}
        Thus, we have $$  \Big (\frac{\lvert H_1 \rvert}{\lvert H_2 \rvert} \Big)^2 < \vert H_1 \vert <\lvert H \rvert.$$
        Also, 
    \begin{align*}
      \frac{\lvert H \rvert}{\lvert H_1 \rvert^{2}} 
            &=\frac{\frac{q^{m(m-1)}(q^{m}+1) }{(4, q^{m}+1)} \prod_{i=1}^{m-1} (q^{2i}-1)}{( \frac{q^{{m(m-1)}} }{ (4, q^{m}+1)}  (q-1)^m )^2 }\\
            &= \frac{ (4, q^{m}+1) (q^{m}+1) \prod_{i=1}^{m-1} (q^{2i}-1)}{ q^{m(m-1)} (q-1)^{2m}}\\
            &< \frac{(4, q^{m}+1) (q^{m}+1) }{q^{m(m-1)}} \frac{\prod_{i=1}^{m-1} q^{2i}}{(q-1)^{2m} }  \\
            &= (4, q^{m}+1) (q^{m}+1) \frac{q^{m(m-1)}}{q^{m(m-1)} (q-1)^{2m}} \\
            &<  \frac{4 (2q^{m})}{(q-1)^{2m}} \\
            & < 8  \quad \quad \quad \quad \quad \quad \quad \quad \quad \quad \quad \quad \quad \quad \quad ( \text{by Remark}\, \ref{Remark-1}) 
    \end{align*}
This implies that,
$$  \Big (\frac{\lvert H \rvert}{\vert H_1 \vert} \Big )^2 < 8\,\lvert H \rvert \implies \frac{\lvert H \rvert}{\vert H_1 \vert} < 3\,\sqrt{\lvert H \rvert}.$$
     
 \item[3.2] $H= ~^2D_m(q^2)$; $m \geq 4$, $q=2$ (Method 2)
      
\vspace{0.1cm}
We know that the group $^2D_m(2^2)$(or ${\rm P\Omega}_{2m}^{-}(2)$) is of order $2^{m(m-1)}(2^{m}+1) \prod_{i=1}^{m-1} (2^{2i}-1)/(4, 2^{m}+1).$ The group ${\rm P\Omega}_{2m}^{-}(2)$ has a maximal subgroup of index $(2^{m}+1) (2^{m-1}-1)$ (see \cite{KL}, p. 175). Let $H_1$ be one such subgroup then its order is,
\begin{align*}
\lvert H_1 \rvert &= \frac{ \vert  ~^2D_m(2^2) \vert }{(2^{m}+1) (2^{m-1}-1)}\\
                      &= \frac{\frac{2^{m(m-1)}(2^{m}+1) }{(4, 2^{m}+1)} \prod_{i=1}^{m-1} (2^{2i}-1) }{(2^{m}+1) (2^{m-1}-1) }\\
                      &=  2^{m(m-1)} (2^{m-1}+1)\prod_{i=1}^{m-2} (2^{2i}-1).
\end{align*}
Let $H_2$ be the Sylow $2$-subgroup of $H_1,$ then $\lvert H_2 \rvert= 2^{m(m-1)}$ and $\vert  H_{2} \vert^2 < \vert ~^2D_m(2^2) \vert. $ Thus,
$$ \frac{ \lvert H_1 \rvert }{\lvert H_2 \rvert }= (2^{m-1}+1)\prod_{i=1}^{m-2} (2^{2i}-1).$$
Consider,
\begin{align*}
    \frac{\Big(\frac{ \lvert H_1 \rvert }{\lvert H_2 \rvert }\Big)^2} { \lvert ~^2D_m(2^2) \rvert } &= \frac{((2^{m-1}+1)\prod_{i=1}^{m-2} (2^{2i}-1))^2}  {\frac{2^{m(m-1)}(2^{m}+1) }{(4, 2^{m}+1)} \prod_{i=1}^{m-1} (2^{2i}-1)  }\\
     &= \frac{(2^{m-1}+1)^2 \prod_{i=1}^{m-2} (2^{2i}-1)}  {2^{{m(m-1)}} (2^{2m-2}-1) (2^{m}+1) }\\
     &= \frac{(2^{m-1}+1) \prod_{i=1}^{m-2} (2^{2i}-1)}  {2^{{m(m-1)}} (2^{m-1}-1) (2^{m}+1) }\\
     &< \frac{(2^{m-1}+1) \prod_{i=1}^{m-2} 2^{2i}}  {2^{{m(m-1)}} (2^{m-1}-1) (2^{m}+1) }\\
     &= \frac{(2^{m-1}+1) 2^{(m-2)(m-1)} }  {2^{{m(m-1)}} (2^{m-1}-1) (2^{m}+1) }\\
     &= \frac{(2^{m-1}+1) }{(2^{m-1}-1) (2^{m}+1)} \cdot \frac{  2^{(m-2)(m-1)} }{2^{{m(m-1)}}}\\
     &< 1.\\
\Big(\frac{ \lvert H_1 \rvert }{\lvert H_2 \rvert }\Big)^2 &<  \lvert ~^2D_m(2^2)\rvert.
\end{align*} \\
 Now,
 \begin{align*}
     \Big(\frac{ \vert ~^2D_m(2^2) \vert  }{\lvert H_1 \rvert }\Big)^2 
                                                    &= ((2^{m}+1) (2^{m-1}-1))^2\\
                                                    &< 2^{m(m-1)} (2^{m}+1) (2^{m-1}-1) (2^{m-1}+1) \prod_{i=1}^{m-2} (2^{2i}-1) \\
                                                    &<2^{m(m-1)}(2^{m}+1) \prod_{i=1}^{m-1} (2^{2i}-1)  \\
                                                    &=  \lvert ~^2D_m(2^2) \rvert.
\end{align*}
\end{enumerate}

\subsection{Exceptional Group of Lie Type}
\begin{enumerate}
   \item[(1)]  $H= G_2(q)$; $q\geq 3$ (Method 1)
        
        \vspace{0.1cm}
        The group $G_2(q)$ is simple for all $q\geq 3$. It has order (see \cite{ASH}, p. 252),
        $$\vert G_2(q) \vert = q^{6} (q^{6}-1)(q^{2}-1).$$ 
        Thus, $G_2(q)$  has a Sylow $p$-subgroup $H_2$ of order $q^{6}$ and $ \lvert H_2 \rvert^2  \leq \rvert G_2(q) \lvert.$ Also,  it has the Borel subgroup $H_1$ of order $q^{6}(q-1)^2$ (see \cite{WLS}, p. 124).
        
        Consider,
        \begin{align*}
            \frac{\lvert  H_1 \rvert}{\lvert H_2 \rvert^2} 
            &= \frac{ q^{6}(q-1)^2}{q^{12} }\\
            &= \frac{(q-1)^2}{q^{6} }\\
            & < 1. 
        \end{align*}
Therefore,
$$  \Big (\frac{\lvert  H_1 \rvert}{\lvert H_2 \rvert} \Big)^2 \leq \lvert  H_1 \rvert \leq \lvert H \rvert. $$
Now,
        \begin{align*}
         \frac{\lvert H \rvert}{\vert H_1 \vert^2} 
            &=\frac{ q^{6} (q^{6}-1)(q^{2}-1)}{( q^{6}(q-1)^2)^2 }\\
            & < \frac{q^{6+2}}{ q^{6}(q-1)^4}\\
            & < \frac{q^{2}}{(q-1)^4}\\
            & < 1  \quad \quad \quad \quad \quad \quad  \quad \quad \quad  \quad \quad \quad ( \text{by Remark}\, \ref{Remark-1})
        \end{align*} 
Thus,
$$ \Big(\frac{\vert H\vert}{\lvert  H_1 \rvert} \Big)^2 < \lvert H \rvert.$$
    \item[(2)]  $H=F_4(q)$ (Method 2)
        
        \vspace{0.1cm}
        The finite simple group $F_4(q)$ has order (see \cite{ASH}, p. 252), $$\vert  F_4(q) \vert = q^{24} (q^{12}-1)(q^{8}-1)(q^{6}-1)(q^{2}-1).$$ 
        
        It is known that (see \cite{WLS}, p.  156) the group $F_4(q)$ has a maximal subgroup $q^{1+14}: Sp_{6}(q).C_{q-1}$ of order $q^{24}(q^{6}-1)(q^4-1) (q^{2}-1) (q-1)$ say $H_1$. This subgroup has a Sylow $p$-subgroup say $H_2$ of order $q^{24}$ and $ \lvert H_2 \rvert^2  \leq \vert F_4(q) \vert.$ Now,
        \begin{align*}
            \frac{\vert H_1 \vert}{\lvert H_2 \rvert^2} 
            &= \frac{ q^{24} (q^{6}-1)(q^4 -1) (q^{2}-1) (q-1) }{q^{48} }\\
            &< \frac{q^{1+2+4+6} }{q^{24} }\\
            &<1.\\
        \end{align*}
Therefore, $$  \Big (\frac{\vert H_1 \vert}{\lvert H_2 \rvert} \Big)^2 < \vert H_1  \vert < \lvert H \rvert.$$
    Now, 
     \begin{align*}
         \frac{\lvert H \rvert}{\vert H_1 \vert^2} 
            &=\frac{q^{24} (q^{12}-1)(q^{8}-1)(q^{6}-1)(q^{2}-1)}{( q^{24} (q^{6}-1)(q^4 -1) (q^{2}-1) (q-1))^2 }\\
            &= \frac{ (q^{4} -1)^2 (q^8+q^4+1) (q^4+1) }{ q^{24} (q^{6}-1)(q^4 -1)^2 (q^{2}-1) (q-1)^2}\\
            &= (1+ \frac{1}{q^4} + \frac{1}{q^8}) (1+ \frac{1}{q^4}) \frac{1  }{ q^{12} (q^{6}-1) (q^{2}-1) (q-1)^2} \\
            &< 1.
            \end{align*}
Thus, we get
$$  \Big (\frac{\lvert H \rvert}{\vert H_1 \vert} \Big)^2 < \lvert H \rvert.$$            
    
    \item[(3)]  $H=E_6(q)$; $q>2$ (Method 1)
        
        \vspace{0.1cm}
        The group $E_{6}(q)$ is a finite simple group. The order of $H=E_6(q)$ is (see \cite{ASH}, p. 252),
        $$\vert  E_6(q) \vert = \frac{q^{36}}{(3,q-1)} (q^{12}-1)(q^{9}-1)(q^{8}-1)(q^{6}-1)(q^{5}-1)(q^{2}-1).$$ Clearly, it has a Sylow $p$-subgroup $H_2$ of order $q^{36}$ and $ \lvert H_2 \rvert^2  \leq \vert E_6(q) \vert.$ Let $H_1$ be the Borel subgroup of $H$ then the order of $H_1$ is $q^{36} (q-1)^6$ (see \cite{WLS}, \cite{CA}).
        Consider, 
        \begin{align*}
            \frac{\lvert H_1 \rvert}{\lvert H_2 \rvert^2} 
            &= \frac{ q^{36} (q-1)^6}{q^{72}}\\
            &< \frac{q^6 }{q^{36} }\\
            &< 1.
        \end{align*}
Thus,
$$ \Big (\frac{\lvert H_1 \rvert}{\lvert H_2 \rvert}\Big )^2 < \lvert H_1 \rvert < \lvert H \rvert.$$
        Now,
        \begin{align*}
         \frac{\lvert H \rvert}{\lvert H_1 \rvert^2} 
            &=\frac{q^{36} (q^{12}-1)(q^{9}-1)(q^{8}-1)(q^{6}-1)(q^{5}-1)(q^{2}-1)}{(3,q-1) ( q^{36} (q-1)^8 )^2 }\\
            &< \frac{q^{12+9+8+6+5+2}}{(3,q-1) q^{36} (q-1)^{16} }   \\
            &= \frac{q^{42}}{ (3,q-1) q^{36} (q-1)^{16} } \\
            &= \frac{ q^{6}}{ (3,q-1) (q-1)^{16} } \\
            &=\frac{1}{(3,q-1) (q-1)^{4}} \frac{ q^{6}}{(q-1)^{12} } \\
            &< \frac{ q^{6}}{(q-1)^{12} } \\
            &<1. \quad \quad \quad  \quad \quad \quad   \quad \quad \quad (\text{by Remark}\, \ref{Remark-1})\\
        \end{align*}
This implies that,
$$ \Big (\frac{\lvert H \rvert}{\lvert H_1 \rvert} \Big )^2 \leq  \lvert H \rvert.$$
        
Notice that, the group $E_6(2)$ is of constant order. However, we can use Method 2 to reduce the constants $b_1$, $b_2$ to $1$. By taking $H_1$ to be maximal subgroup of order (see \cite{ATLAS}) $2^{36}\cdot3^{3}\cdot5\cdot7\cdot31$ and $H_2$ to be its Sylow $2$-subgroup of order $2^{36}$.
    \vspace{0.1cm}

    \item[(4)] $H= ~^2E_6(q)$ (Method 1)
            
            \vspace{0.1cm}
           The group $^2E_6(q)$ is simple for all $q$ and it has order (see \cite{ASH}, p. 252), $$\vert  ~^2E_6(q) \vert = \frac{q^{36}(q^2-1)(q^5+1)(q^6-1)(q^8-1)(q^9+1)(q^{12}-1)}{(3,q+1)}.$$ 
            The group $~^2E_6(q)$ has a Sylow $p$-subgroup $H_2$ of order $q^{36}$ and $ \lvert H_2 \rvert^2  \leq \vert H \vert.$ It is well known that the order of the Borel subgroup is a product of the order of the Sylow $p$-subgroup and the maximal torus. The order of the maximal torus in $H$ is $(q+1)^6$(see \cite{WLS}, p. 173). Thus, the order of the Borel subgroup $H_1$ of $H$ is $q^{36} (q+1)^6$. 
            
            Consider, 
        \begin{align*}
            \frac{\lvert  H_1 \rvert}{\lvert H_2 \rvert^2} 
            &= \frac{ q^{36} (q+1)^6}{q^{72} }\\
            &< \frac{(q+1)^6}{q^{36} }\\
            &< \frac{(q+q)^6 }{q^{36} }\\
            & < 1.
        \end{align*}
Thus, $$ \Big(\frac{\vert H_1 \vert}{\lvert H_2 \rvert} \Big)^2 <  \lvert H_1 \rvert <  \lvert H \rvert.$$
    Also,    
     \begin{align*}
         \frac{\lvert H \rvert}{\vert H_1 \vert^2} 
            &=\frac{ q^{36} (q^{2}-1)(q^{5}+1)(q^{6}-1)(q^{8}-1)(q^{9}+1)(q^{12}-1)}{( q^{36}(q+1)^6 )^2 }\\
            &< \frac{(q^{5}+1)(q^9+1)}{q^8(q+1)^{12} } \\
            &< 1.
        \end{align*}
    Therefore,
    $$  \Big (\frac{\lvert H \rvert}{\lvert  H_1 \rvert} \Big)^2 < \lvert H \rvert.$$

    \item[(5)] $H = ~^3D_4(q)$ (Method 1)
    
    \vspace{0.1cm}
            The group $^3D_4(q)$ is simple for all $q$. It is known that the order of the group $^3D_4(q)$ is (see \cite{ASH}, p. 252), $$\vert  ~^3D_4(q) \vert= q^{12} (q^{8}+q^4+1)(q^{6}-1)(q^{2}-1).$$ Clearly, it has a Sylow $p$-subgroup $H_2$ of order $q^{12}$ and the Borel subgroup $H_1$ of order $q^{12}(q^3-1) (q-1)$ (see \cite{WLS}, p. 141). Also, notice that $ \lvert H_2 \rvert^2  \leq \lvert H \rvert.$
        
        Consider, 
        \begin{align*}
            \frac{\lvert  H_1 \rvert}{\lvert H_2 \rvert^2} 
            &= \frac{ q^{12} (q^3-1)(q-1)}{q^{24} }\\
            &= \frac{(q^3-1)(q-1) }{q^{12} }\\
            & < \frac{q^4 }{q^{12} }\\
            & < 1.
        \end{align*}
        
Thus, $$ \Big (\frac{\vert H_1 \vert}{\lvert H_2 \rvert} \Big)^2 < \vert H_1\vert < \lvert H \rvert.$$
     \begin{align*}
         \frac{\lvert H \rvert}{\vert H_1 \vert^2} 
            &=\frac{ q^{12}(q^{8}+q^4+1)(q^{6}-1)(q^{2}-1)}{( q^{12} (q^3-1)(q-1) )^2 }\\
            &=  \frac{(q^{8}+q^4+1)(q^3+1)(q+1)}{q^{12}(q^3-1)(q-1) }\\
            &= \frac{(1+\frac{1}{q^4}+\frac{1}{q^8})(1+\frac{1}{q^3})(1+\frac{1}{q})}{(q^3-1)(q-1)} \\
            & < 1.
        \end{align*}
Therefore,
$$  \Big (\frac{\lvert H \rvert}{\lvert  H_1 \rvert} \Big)^2 < \lvert H \rvert.$$
     \item[(6)] $H= E_7(q)$, $q>2$ (Method 1)
        
        \vspace{0.1cm}
        The simple group $E_7(q)$ has order (see \cite{ASH}, p. 252),
        $$\vert  E_7(q) \vert = \frac{q^{63}}{(2,q-1)} (q^{18}-1)(q^{14}-1)(q^{12}-1)(q^{10}-1)(q^{8}-1)(q^{6}-1)(q^{2}-1).$$ 
        Clearly, it has a Sylow $p$-subgroup $H_2$ of order $q^{63}$ and $ \lvert H_2 \rvert^2  \leq \lvert H \rvert.$ Let $H_1$ be the borel subgroup of $H$ then its order is (see \cite{WLS}, \cite{CA}),
        $$\lvert H_1 \rvert = q^{63} (q-1)^7.$$ 
        \allowdisplaybreaks
        Consider, 
        \begin{align*}
            \frac{\lvert H_1 \rvert}{\lvert H_2 \rvert^2} 
            &= \frac{ q^{63} (q-1)^7}{q^{126}}\\
            &< \frac{q^7 }{q^{63} }\\
            &< 1
        \end{align*}
Thus,
$$   \Big (\frac{\lvert H_1 \rvert}{\lvert H_2 \rvert} \Big )^2  < \lvert H_1 \rvert  < \lvert H \rvert.$$
        \begin{align*}
         \frac{\lvert H \rvert}{\lvert B \rvert^2} 
            &=\frac{\frac{q^{63}}{(2,q-1)} (q^{18}-1)(q^{14}-1)(q^{12}-1)(q^{10}-1)(q^{8}-1)(q^{6}-1)(q^{2}-1)}{ ( q^{63} (q-1)^7 )^2 }\\
            &< \frac{q^{18+14+12+10+8+6+2}}{(2,q-1) q^{63} (q-1)^{14} }   \\
            &= \frac{q^{70}}{ (2,q-1) q^{63} (q-1)^{14} } \\
            &= \frac{1}{(2,q-1)} \frac{q^{7}}{ (q-1)^{14}}\\
            &< 1.  \quad \quad \quad \quad \quad \quad \quad \quad \quad \quad \quad \quad \quad\quad \quad \quad  (q \neq 2). 
        \end{align*}
Therefore,
$$ \Big (\frac{\lvert H \rvert}{\lvert H_1 \rvert} \Big )^2 < \lvert H \rvert. $$
 Notice that, the group $E_7(2)$ is of constant order. However, we can use Method 2 to reduce the constants $b_1$, $b_2$ to $1$. By taking $H_1$ to be maximal subgroup of order (see \cite{ATLAS}) $2^{63}\cdot3^{4}\cdot7^{2}\cdot5$  and $H_2$ to be its Sylow $2$-subgroup of order $2^{63}$.
 
        \vspace{0.1cm}

        \item[(7)] $H= E_8(q)$, $q>2$ (Method 1)
        
        \vspace{0.1cm}
        We know that the group $E_8(q)$ is simple for all $q$ and has order (see \cite{ASH}, p. 252),
        $$\vert H \vert = q^{120} (q^{30}-1)(q^{24}-1)(q^{20}-1)(q^{18}-1)(q^{14}-1)(q^{12}-1)(q^{8}-1)(q^{2}-1).$$ 
        The group $E_8(q)$ has a Sylow $p$-subgroup $H_2,$ then $\lvert H_2 \rvert = q^{120}$ and $ \lvert H_2 \rvert^2  \leq \vert H \vert.$ Let $H_1$ be the Borel subgroup then order $\lvert H_1 \rvert = q^{120} (q-1)^8 $ (see \cite{WLS}, p. 176).
        
        Consider, 
        \begin{align*}
            \frac{\lvert  H_1 \rvert}{\lvert H_2 \rvert^2} 
            &= \frac{ q^{120} (q-1)^8}{q^{240} }\\
            &< \frac{q^8 }{q^{120} }\\
            &< 1.
        \end{align*}
Threfore,
$$  \Big (\frac{\vert H_1 \vert}{\lvert H_2 \rvert} \Big )^2 < \vert H_1\vert < \lvert H \rvert.$$
       \begin{align*}
         \frac{\lvert H \rvert}{\vert H_1 \vert^2} 
            &=\frac{ q^{120} (q^{30}-1)(q^{24}-1)(q^{20}-1)(q^{18}-1)(q^{14}-1)(q^{12}-1)(q^{8}-1)(q^{2}-1) }{( q^{120} (q-1)^8 )^2 }\\
            &=  \frac{(q^{8}-1)}{(q-1)^{16} } \prod_{i\in \{2,12,14,18,20,24,30 \}} (1- \frac{1}{q^{i}}) \\
            & \leq  \frac{q^{8}}{(q-1)^{16} } \\
            & \leq 1.  \quad \quad \quad  \quad \quad \quad  \quad \quad \quad ( \text{by Remark}\, \ref{Remark-1})
        \end{align*}
This implies that,
$$ \Big (\frac{\lvert H \rvert}{\lvert  H_1 \rvert} \Big)^2 \leq \lvert H \rvert.$$
   Notice that, the group $E_8(2)$ is of constant order. However, we can use Method 2 to reduce the constants $b_1$, $b_2$ to $1$. By taking $H_1$ to be maximal subgroup of order (see \cite{ATLAS}) $2^{119}\cdot3^{4}\cdot 5\cdot7^{2}\cdot31$ and $H_2$ to be its Sylow $2$-subgroup of order $2^{119}$.
    \vspace{0.1cm}
    
    \item[(8)]  $H =~^2B_2(q)$, $q=2^{2t+1}$ and $t\geq1$ (Method 1)

\vspace{0.1cm}
    The finite simple group $^2B_2(q)$ has order (see \cite{ASH}, p. 252), $$\vert H \vert = q^2(q^2+1)(q-1).$$
    Thus, it has a Sylow 2-subgroup  $H_2$  of order $q^2$ and $ \lvert H_2 \rvert^2  \leq \lvert H \rvert.$ Let $H_1$ be the Borel subgroup of $^2B_2(q)$ then the order $H_1$ is  $q^{2}(q-1)$ (see \cite{WLS}, p.  117).
    
    Consider,
    \begin{align*}
    \frac{\lvert  H_1 \rvert}{\lvert H_2 \rvert^2}
            &=\Big(\frac{q^{2}(q-1)}{q^{4}}\Big)\\
            & < 1.
    \end{align*}
    Therefore,
    $$\Big (\frac{\lvert  H_1 \rvert}{\lvert H_2 \rvert} \Big )^2  < \lvert  H_1 \rvert < \lvert H \rvert.$$
    Now, 
     \begin{align*}
         \frac{\lvert H \rvert}{\vert H_1 \vert^2} 
            &=\frac{q^{2}(q^2+1)(q-1)}{(q^{2}(q-1))^2}\\
            & = \frac{(q^2+1)}{q^{2}(q-1)}\\
            & = \frac{1+\frac{1}{q^2}}{q-1}\\
            & < 1.
    \end{align*}
Thus, $$ \Big (\frac{\lvert H \rvert}{\lvert  H_1 \rvert} \Big )^2 <  \vert H  \vert. $$

     \item[(9)] $H= ~^2G_2(q)$; where, $q=3^{2t+1}$ and $t\geq 1$ (Method 1)
        
        \vspace{0.1cm}
        The group $^2G_2(q)$ has order (see \cite{ASH}, p. 252), $$\lvert  ~^2G_2(q) \rvert = q^3(q^3+1)(q-1).$$ Thus, $^2G_2(q)$ has a Sylow 3-subgroup of order $q^3$  and  $\lvert H_2 \rvert^2 < \lvert H \rvert.$ Also, $ ^2G_2(q)$ has the Borel subgroup $H_1$ of order $q^3(q-1)$ (see\cite{WLS}, p. 137).
        
        Consider,
        \begin{align*}
        \frac{\lvert  H_1 \rvert}{\lvert H_2 \rvert ^2}
            &=\frac{q^{3}(q-1)}{q^{6}}\\
            &< 1.
        \end{align*}
Therefore,
$$ \Big (\frac{\lvert  H_1 \rvert}{\lvert H_2 \rvert} \Big)^2  < \lvert  H_1 \rvert < \lvert H \rvert.$$
         Now,
        \begin{align*}
        \frac{\lvert H \rvert}{ \lvert  H_1  \rvert ^2}
            &=\frac{q^3(q^3+1)(q-1)}{(q^{3}(q-1))^2}\\
            &=\frac{(q^3+1)}{q^{3}(q-1)}\\
            &= \frac{1+\frac{1}{q^3}}{q-1}\\
            & < 1.
            \end{align*}
Therefore,
$$ \Big(\frac{\lvert H \rvert}{\lvert  H_1 \rvert} \Big)^2 < \lvert H \rvert.$$
    
     \item [(10)] $H= ~^2F_4(q)$, $q=2^{2t+1}$ and $t\geq1$ (Method 1)
        
        \vspace{0.1cm}
        The order of the group $^2F_4(q)$ is $q^{12}(q^6 +1)(q^4-1)(q^3+1)(q-1)$ (see \cite{ASH}, p. 252). Thus, it has a Sylow $2$-subgroup of order $q^{12}$ and $\lvert H_2 \rvert^2 < \vert H \vert.$
        Also, $^2F_4(q)$ has the Borel subgroup $H_1$ of order $q^{12}(q-1)^2$ (see \cite{WLS}, p.  165).
        
        Consider,
        \begin{align*}
        \frac{\lvert  H_1 \rvert}{ \lvert H_2 \rvert ^2}
            &=\frac{q^{12}(q-1)^2}{q^{24}}\\
            & < 1.
        \end{align*}
    Therefore,  $$ \Big (\frac{\lvert  H_1 \rvert}{\lvert H_2 \rvert} \Big)^2  < \lvert  H_1 \rvert < \lvert H \rvert.$$
      \begin{align*}
        \frac{\lvert H \rvert}{ \lvert  H_1 \rvert ^2}
            &=\frac{q^{12}(q^6+1)(q^4-1)(q^3+1)(q-1)}{(q^{12}(q-1)^2)^2}\\
            &= \frac{(q^6+1)(q^4-1)(q^3+1)(q-1)}{q^{12}(q-1)^4}\\
            &=\frac{(q^6+1)(q^4-1)(q^3+1)}{q^{12}(q-1)^3}\\
            & < \frac{q^4 (1+\frac{1}{q^6})(1+\frac{1}{q^3})}{q^3(q-1)^3}\\
            &= \frac{q (1+\frac{1}{q^6})(1+\frac{1}{q^3})}{(q-1)^3}\\
            & <1.
        \end{align*}
        This implies that,
        $$  \Big (\frac{\vert H) \vert}{\lvert  H_1 \rvert} \Big)^2 < \lvert H \rvert.$$
    
    \item[(11)] $H= ~^2F_4(2)';$ (Method 2)
    
The simple group $H= ~^2F_4(2)';$ has order $17971200.$ It is known that $H$ has a maximal subgroup of order $11232$. We take $H_1$ to be this maximal subgroup and $H_2$ to be the Sylow $2$-subgroup of $H_1$ which has order $32$. Thus, we get $b_{1}=b_{2}=1$. 
\end{enumerate}

\subsection{Tables}\label{tables}

In this section we cover the details of Sporadic simple groups (Table \ref{Table-2}), and the order of all the simple groups that we define in cases (ii)-(iv) of Theorem \ref{classification-thm} in Table~\ref{Table-3} and \ref{Table-4}. These tables also contain the order of the subgroups $H_2$ and $H_1$. Table \ref{Table-2} represents the information about the subgroups $H_2$ and $H_1$ of the Sporadic simple groups.

\newpage
\begin{table}
\centering
\begin{tabular}{|p{.85cm}| p{4.5cm} | p{1.3cm}| p{3.9cm}| p{0.5cm}| p{0.5cm}|}
\hline
$H$ & Order of $H$ & Order of $H_2$ &  Order of $H_1$ &  $ b_1 $ & $ b_2 $ \\ \hline

 ${\rm M_{11}}$ & 7920 & $2^4$ & 720 &   $1$ & $1$  \\  \cline{1-6} 
 ${\rm M_{12}}$  & 95040 & $2^2$ & 660 &   $1$ & $1$   \\  \cline{1-6}
 ${\rm M_{22}}$  & 443520 & $2^6$ & 20160 &   $1$ & $1$   \\  \cline{1-6}
 ${\rm M_{23}}$  & 10200960 & $ 2^7$ & 443520 &   $1$ & $1$   \\  \cline{1-6}
 ${\rm M_{24}}$  & 244823040 & $2^8 $ & 887040 &   $1$ & $1$   \\  \cline{1-6}
 
  ${\rm {Co}_{1}}$ & 4157776806543360000 & 262144 & 42305400000000 &   $1$ & $1$   \\  \cline{1-6}
  ${\rm {Co}_{2}}$ & 42305400000000 & 262144 & 908328960 &   $1$ & $1$   \\  \cline{1-6}
  ${\rm {Co}_{3}}$ & 495767000000 & $ 2^7$ & 10200960 &   $1$ & $1$   \\  \cline{1-6}
  ${\rm McL}$ & $898128000$ & $3^{6}$ & $3^6 \cdot 2^7 \cdot7 \cdot 5 $ &   $1$ & $1$   \\  \cline{1-6}
  ${\rm HS}$ &  $44352000$ & $2^{7}$ & $2^{7} \cdot 3^{2} \cdot 5\cdot7\cdot11 $ &   $1$ & $1$   \\  \cline{1-6}
 
  ${\rm Suz}$ & $448345497600$ & $2^{12}$ & $251596800$ &   $1$ & $1$   \\  \cline{1-6}
  ${\rm J_{2}}$ &  604800 & $2^5$ & 6048 &   $1$ & $1$   \\  \cline{1-6}
 
  ${\rm Fi_{22}}$ & $ 64561751654400 $ & $2^{16}$ & $2^{16}(2^{6}-1)(2^{5}+1)(2^{4}-1)(2^{3}+1)$ &   $1$ & $1$   \\  \cline{1-6}
  ${\rm Fi_{23}}$ & $4089470473293004800$  &  $2^{18}$ & $2^{18} \cdot 3^{9}\cdot 5^{2}\cdot 7 \cdot 11\cdot 13$ &   $1$ & $1$   \\  \cline{1-6}
 
  ${\rm Fi^{`}_{24}}$ & 1255205709190661721292800 & $2^{19}$ & $2^{19} \cdot 3^{13} \cdot 5^{2} \cdot 7 \cdot 11 \cdot 13 \cdot 17 \cdot 23$  &   $1$ & $1$   \\  \cline{1-6}
 
  ${\rm \mathbb{M}}$ & $t_1$ & $ 2^{42}$ & $ t_2 $ &   $1$ & $1$   \\  \cline{1-6}
  ${\rm \mathbb{B}}$ & $t_3$ & 2$^{38}$ & $t_4 $ &   $1$ & $1$   \\  \cline{1-6}
  ${\rm Th}$ & $90745943887872000$ & $2^{15} $ & $319979520$ &   $1$ & $1$   \\  \cline{1-6}
  ${\rm HN}$ & $273030912000000$ & $2^9$ & $239500800$ &   $1$ & $1$   \\  \cline{1-6}
  ${\rm He}$ & $4030387200$ & $2^{8}$ & $2^{8} \cdot 255 \cdot 15$ &   $1$ & $1$   \\  \cline{1-6}
 
  ${\rm J_1}$ & 175560 & $2^2$ & 660 &   $1$ & $1$   \\  \cline{1-6}
  ${\rm J_3}$ & 50232960 & $2^5$ & 8160 &   $1$ & $1$   \\  \cline{1-6}
  ${\rm J_4}$ & 86775571046077562880 & 2097152 & 57161637225 &   $1$ & $1$   \\  \cline{1-6}
  ${\rm O'N}$ & $460815505920$ & $2^6$ & $3753792$ &   $1$ & $1$   \\  \cline{1-6}
  ${\rm Ly}$ & $51765179004000000$ & $ 15625 $ & $5859000000$ &   $1$ & $5$   \\  \cline{1-6}
  ${\rm Ru}$ & $145926144000$ & $2^{12}$ & $35942400 $  &   $1$ & $1$ \\ \cline{1-6}
\end{tabular}

\caption{Table representing the constant factor and Method used for choosing suitable subgroups}
\label{Table-2}
\end{table}

\noindent In Table \ref{Table-2} we consider the values of $t_i, i=1,2,3,4$ as follows. 

$t_1=808017424794512875886459904961710757005754368000000000$

$t_2= 2^{42} \cdot 3^{13} \cdot 5^{6} \cdot 7^{2} \cdot 11 \cdot 13 \cdot 17 \cdot 19 \cdot 23 \cdot 31 \cdot 47$

$t_3= 4154781481226426191177580544000000 $

$t_4= 2^{38}\cdot (2^{12}-1) \cdot (2^{9}+1) \cdot (2^{8}-1)\cdot (2^{6}-1)\cdot (2^{5}+1)\cdot (2^{2}-1)$.

\noindent In the Table \ref{Table-3}, the values of $c_{mi}, i=1,2,3,4,5$ are as follows. 

$c_{m1}=\frac{q^{\frac{m(m+1)}{2} }} {(q+1,m+1)} $ $(q-1)^{\lfloor m/2 \rfloor} (q+1)^{\lceil \frac{m-1}{2}\rceil}$, $c_{m2}= \frac{3}{(3,m+1)} \frac{ {2^{\frac{m(m+1)}{2}}}  \prod_{i=1}^{m} (2^{i+1}-(-1)^{i+1})} {(2^{m+1}- (-1)^{m+1})(2^{m}-(-1)^{m})}$

$c_{m3} = \frac{3}{(3,m+1)}  {2^{\frac{m(m-1)}{2}}}  \prod_{i=1}^{m-1} (2^{i+1}-(-1)^{i+1})$, $c_{m4}=2^{m^{2} -m+1} (2^{m} +1)  \prod_{i=1}^{m-1} (2^{2i}-1)$

$c_{m5}= 2^{m(m-1)} (2^{m-1}+1)\prod_{i=1}^{m-2} (2^{2i}-1)$.

 \begin{table}
\centering
 
\begin{tabular}{|p{2.1cm}| p{4.9cm} | p{1.6cm}| p{3.5cm}|} 
 \hline
 $H$ & $\lvert H \rvert$ & $\lvert H_2 \rvert$ & $\lvert H_1 \rvert $\\ 
\hline
 $A_{m}(q)$; $q>2$ & $\frac{q^{\frac{m(m+1)}{2} }  \prod_{i=1}^{m} (q^{i+1}-1)  }{(q-1,m+1)}$&  $q^{\frac{m(m+1)}{2}}$  & $\frac{q^{\frac{m(m+1)}{2}}}{(q-1,m+1)} (q-1)^m$ \\ 
  \hline
$A_{m}(2)$; $q=2$ & $2^{\frac{m(m+1)}{2} }  \prod_{i=1}^{m} (2^{i+1}-1) $ & $ 2^{\frac{m(m+1)}{2} }$  & $\frac{ 2^{\frac{m(m+1)}{2} }  \prod_{i=1}^{m} (2^{i+1}-1) }{ (2^{m+1}-1) }$\\ 
 \hline
$^2A_m(q^2)$; $q>2$, $m >1$ & $\frac{q^{\frac{m(m+1)}{2} } }{(q+1,m+1)}\prod_{i=1}^{m}(q^{i+1}-(-1)^{i+1})$ & $q^{\frac{m(m+1)}{2} }$  &  $c_{m1} $ \\

\hline
 
 $^2A_m(2^2)$; $q= 2$, $m>1$, $6 \nmid (m-1)$ & $\frac{2^{\frac{m(m+1)}{2} } }{(3,m+1)}  \prod_{i=1}^{m} (2^{i+1}-(-1)^{i+1})$ & $2^{\frac{m(m+1)}{2}}$ & $c_{m2}$ \\
 \hline
 $^2A_m(2^2)$; $q= 2$, $m>1$, $6 \mid (m-1) $ & $\frac{2^{\frac{m(m+1)}{2} } }{(3,m+1)}  \prod_{i=1}^{m} (2^{i+1}-(-1)^{i+1})$ & $2^{\frac{m(m-1)}{2}}$ & $c_{m3}$ \\ 
  \hline
  
$C_{m}(q)$; $q>2$, $m>2 $ & $\frac{q^{m^2 }  \prod_{i=1}^{m} (q^{2i}-1)  }{(2,q-1)}$ &  $q^{m^2}$  & $\frac{q^{m^2}}{(2,q-1)} (q-1)^m$  \\ 
 \hline
 $C_{m}(2)$; $q=2$, $m>2 $ &  $2^{m^2 }  \prod_{i=1}^{m} (2^{2i}-1)$ & $ 2^{m^{2} -m+1} $  & $c_{m4} $ \\ 

 \hline
$B_{m}(q)$; $q$ odd, $m>1$ & $\frac{q^{m^2 }  \prod_{i=1}^{m} (q^{2i}-1)  }{(2,q-1)}$ & $q^{m^2}$  & $\frac{q^{m^2}}{(2,q-1)} (q-1)^m$ \\ 

\hline
 $D_{m}(q)$; $q>2$, $m>3 $ & $\frac{q^{{m(m-1)}} (q^{m}-1) \prod_{i=1}^{m-1} (q^{2i}-1)  }{(4, q^{m}-1)}$ &  $q^{{m(m-1)}}$ & $\frac{q^{{m(m-1)}} }{ (4, q^{m}-1)}  (q-1)^m$ \\ 

 \hline
$D_{m}(2)$; $q=2$, \newline $m>3 $ & $2^{{m(m-1)}} (2^{m}-1) \prod_{i=1}^{m-1} (2^{2i}-1)$ &  $ 2^{m^{2}-2m+1}  $ & $ 2^{m^{2}-2m+1} \prod_{i=1}^{m-1} (2^{2i}-1)$ \\  
 \hline
$^2D_m(q^2)$;\newline $q> 2$, $m>3$ & $\frac{q^{m(m-1)}(q^{m}+1) }{(4, q^{m}+1)} \prod_{i=1}^{m-1} (q^{2i}-1)  $ & $q^{{m(m-1)}}$ & $\frac{q^{{m(m-1)}} }{ (4, q^{n}+1)}  (q-1)^m $ \\
 \hline
 $^2D_m(2^2)$; $q= 2$, $m>3$ & $\frac{2^{m(m-1)}(2^{m}+1) }{(4, 2^{m}+1)} \prod_{i=1}^{m-1} (2^{2i}-1)$ &  $2^{m(m-1)}$ & $c_{m5} $\\ 
 \hline
 \end{tabular}

    \caption{Order of the simple groups (case (iii) of Theorem \ref{classification-thm}) and order of its subgroups $H_2, H_1$. }
    \label{Table-3}
\end{table}

 \begin{table}
\centering
 
\begin{tabular}{|p{2.3cm}| p{4.9cm} | p{1.3cm}| p{3.3cm}|} 
 \hline
 $G_2(q)$ & $q^{6} (q^{6}-1)(q^{2}-1)$ &  $q^{6}$  & $q^{6}(q-1)^2$ \\
 
\hline
 $F_4(q)$ &$q^{24} \prod_{i \in \{2,6,8,12\}} (q^i -1)$ & $q^{24}$ & $q^{24} \prod_{i \in \{1,2,4,6\}} (q^i -1)$ \\ 
\hline 
$E_6(q)$, $q>2$ & $\frac{q^{36}}{(3,q-1)} \prod_{i \in \{2,5,6,8,9,12\}} (q^i -1)$ &$q^{36}$ & $q^{36} (q-1)^6$ \\
\hline
 $E_6(2)$ & $2^{36} \prod_{i \in \{2,5,6,8,9,12\}} (2^i -1)$ & $2^{36}$ & $2^{36}\cdot3^{3}\cdot5\cdot7\cdot31$ \\
\hline
$^2E_6(q)$  & $\frac{q^{36} (q^9+1)}{(3,q+1)} \prod_{i \in \{2,5,6,8,12\}} (q^i -1)$ & $q^{36}$ & $q^{36} (q-1)^4(q+1)^2 $ \\ 
\hline
$^3D_4(q)$ & $ q^{12} (q^{8}+q^4+1)(q^{6}-1)(q^{2}-1)$ & $q^{12}$ &  $q^{12} $ $(q^3-1) (q-1)$ \\ 
 \hline 
$E_7(q)$, $q>2$ & $\frac{q^{63}}{(2,q-1)} \prod_{i \in \{2,6,8,10,12,14,18\}} (q^i -1)$ &  $q^{63}$ & $q^{63} (q-1)^7$ \\ 
 \hline
 $E_7(2)$ & $2^{63} \prod_{i \in \{2,6,8,10,12,14,18\}} (2^i -1)$ &  $2^{63}$ & $2^{63}\cdot3^{4}\cdot7^{2}\cdot5$ \\
 \hline
$E_8(q)$, $q>2$ & $ q^{120} \prod_{i \in \{2,8,12,14,18,20,24,30\}} (q^i -1)$ &  $q^{120} $ & $q^{120} (q-1)^8$  \\ 
 \hline
 $E_8(2)$ & $ 2^{120}  \prod_{i \in \{2,8,12,14,18,20,24,30\}} (2^i -1)$ &  $2^{119} $ & $2^{119}\cdot3^{4}\cdot 5\cdot7^{2}\cdot31$  \\ 
 \hline
 $^2B_2(q)$;\newline $q=2^{2t+1},$ $t \geq 1 $ & $ q^2(q^2+1)(q-1)$ &  $q^{2}$  & $q^{2}(q-1)$\\ 
 \hline
 
$^2G_2(q)$;\newline $q=3^{2t+1},$ $t \geq 1$  & $q^3(q^3+1)(q-1)$ & $q^3$ & $q^3(q-1)$\\
 \hline
$^2F_4(q)$;\newline $q=2^{2t+1},$ $t \geq 1 $ & $ q^{12}(q^6 +1)(q^4-1)(q^3+1)(q-1)$ &  $q^{12}$ & $q^{12}(q-1)^2$\\ 
 \hline

\end{tabular}

    \caption{Order of the simple groups (case (iv) of Theorem \ref{classification-thm}) and order of its subgroups $H_2, H_1$.}
    \label{Table-4}
\end{table}

\end{document}